\newtheorem{theorem}{Theorem}
\newtheorem{lemma}[theorem]{Lemma}
\newtheorem{corollary}[theorem]{Corollary}
\newtheorem{conjecture}{Conjecture}
\title{Ham-Sandwich cuts and center transversals in subspaces}
\author{Patrick Schnider\thanks{Department of Computer Science,
        ETH Z\"{u}rich, Switzerland. {\tt patrick.schnider@inf.ethz.ch}}}
\date{}
\begin{document}

\maketitle

\begin{abstract}
The Ham-Sandwich theorem is a well-known result in geometry.
It states that any $d$ mass distributions in $\mathbb{R}^d$ can be simultaneously bisected by a hyperplane.
The result is tight, that is, there are examples of $d+1$ mass distributions that cannot be simultaneously bisected by a single hyperplane.
In this abstract we will study the following question: given a continuous assignment of mass distributions to certain subsets of $\mathbb{R}^d$, is there a subset on which we can bisect more masses than what is guaranteed by the Ham-Sandwich theorem?

We investigate two types of subsets.
The first type are linear subspaces of $\mathbb{R}^d$, i.e., $k$-dimensional flats containing the origin.
We show that for any continuous assignment of $d$ mass distributions to the $k$-dimensional linear subspaces of $\mathbb{R}^d$, there is always a subspace on which we can simultaneously bisect the images of all $d$ assignments.
We extend this result to center transversals, a generalization of Ham-Sandwich cuts.
As for Ham-Sandwich cuts, we further show that for $d-k+2$ masses, we can choose $k-1$ of the vectors defining the $k$-dimensional subspace in which the solution lies.

The second type of subsets we consider are subsets that are determined by families of $n$ hyperplanes in $\mathbb{R}^d$.
Also in this case, we find a Ham-Sandwich-type result.
In an attempt to solve a conjecture by Langerman about bisections with several cuts, we show that our underlying topological result can be used to prove this conjecture in a relaxed setting.
\end{abstract}

\section{Introduction}
The famous Ham-Sandwich theorem (see e.g.\ \cite{Matousek, StoneTukey}, Chapter 21 in \cite{Handbook}) is a central result in geometry that initiated a significant amount of research on several ways to partition mass distributions.
It states that any $d$ mass distributions in $\mathbb{R}^d$ can be simultaneously bisected by a hyperplane.
A ($d$-dimensional) \emph{mass distribution} $\mu$ on $\mathbb{R}^d$ is a measure on $\mathbb{R}^d$ such that all open subsets of $\mathbb{R}^d$ are measurable, $0<\mu(\mathbb{R}^d)<\infty$ and $\mu(S)=0$ for every lower-dimensional subset $S$ of $\mathbb{R}^d$.
An intuitive example of a mass distribution is, for example, the volume of some full-dimensional geometric object in $\mathbb{R}^d$.
The Ham-Sandwich theorem has been generalized in several ways.
One famous generalization is the polynomial Ham-Sandwich theorem, which states that any $\binom{n+d}{d}-1$ mass distributions in $\mathbb{R}^d$ can be simultaneously bisected by an algebraic surface of degree $n$ \cite{StoneTukey}.
Another extension is the center transversal theorem, which generalizes the result to flats of lower dimensions:

\begin{theorem}[Center transversal theorem \cite{dolnikov, zivaljevic}]
Let $\mu_1,\ldots,\mu_k$ be $k$ mass distributions in $\mathbb{R}^d$, where $k\leq d$.
Then there is a $(k-1)$-dimensional affine subspace $g$ such that every halfspace containing $g$ contains at least a $\frac{1}{d-k+2}$-fraction of each mass.
\end{theorem}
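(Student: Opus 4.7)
The plan is to reduce the problem to the centerpoint theorem in a quotient and close the remaining gap with an equivariant Borsuk-Ulam argument, following the configuration-space / test-map paradigm. First, for any $(k-1)$-dimensional linear subspace $V \subset \mathbb{R}^d$, let $\pi_V : \mathbb{R}^d \to V^\perp$ denote orthogonal projection, and set $\nu_i^V := (\pi_V)_* \mu_i$. Halfspaces of $\mathbb{R}^d$ whose boundary contains the affine flat $c + V$ correspond, mass-preservingly, to halfspaces of $V^\perp$ containing $c$. Since $\dim V^\perp = d-k+1$, Rado's centerpoint theorem supplies a nonempty compact convex set $C_i^V \subset V^\perp$ of $\tfrac{1}{d-k+2}$-centerpoints for each $\nu_i^V$. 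The theorem then reduces to finding a direction $V$ such that $\bigcap_{i=1}^k C_i^V \neq \emptyset$.

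For $k=1$ this is immediate, and already for $k=2$ Helly's theorem is too weak to conclude (one has $k$ convex sets in dimension $d-k+1$, and $k$ can be as large as $d$), so I would set up an equivariant test map. Lift $\mathbb{R}^d$ to the affine hyperplane $\{x_{d+1}=1\} \subset \mathbb{R}^{d+1}$, and let an orthonormal $k$-frame $(v_1,\ldots,v_k)$ in the Stiefel manifold $V_k(\mathbb{R}^{d+1})$ induce $k$ oriented halfspaces $H_1^+,\ldots,H_k^+ \subset \mathbb{R}^d$ whose bounding hyperplanes all contain a common $(k-1)$-flat $g$. Define
\[
\Phi : V_k(\mathbb{R}^{d+1}) \longrightarrow \mathbb{R}^k, \qquad \Phi_i(v_1,\ldots,v_k) = \mu_i(H_i^+) - \tfrac{1}{d-k+2}.
\]
The map $\Phi$ is equivariant under the $(\mathbb{Z}_2)^k$-action that negates each $v_i$ (swapping $H_i^\pm$) and correspondingly transforms the target. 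A frame at which no coordinate is strictly negative would, using the sign-change symmetry, witness the center-transversal condition on the induced flat; hence it suffices to find such a frame.

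The remaining step is purely topological: show that no $(\mathbb{Z}_2)^k$-equivariant map from $V_k(\mathbb{R}^{d+1})$ to the sign representation sphere $S(\mathbb{R}^k)$ can exist, so $\Phi$ must attain a value with all coordinates of compatible sign. This is where I expect the main obstacle to lie, since the projection reduction and the construction of $\Phi$ are comparatively routine. The obstruction is verified by a Borsuk-Ulam-type statement for Stiefel manifolds, typically via a Fadell-Husseini ideal-index computation or by showing that the relevant mod-$2$ Euler/Stiefel-Whitney class of the canonical $k$-plane bundle over the Grassmannian $G(k,d+1)$ is non-zero. This cohomological non-vanishing is the delicate content of the Dol'nikov-Živaljević approach; once established, a zero of $\Phi$ pulls back to a $(k-1)$-flat whose every containing halfspace has mass at least $\tfrac{1}{d-k+2}$ of each $\mu_i$, as required.
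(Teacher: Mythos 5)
Your first paragraph is the right reduction, and it is exactly how the cited proofs (and the machinery in Section~\ref{Sec:transversal} of this paper, which generalizes them) begin: a $(k-1)$-flat $g=c+V$ is a center transversal iff $\pi_V(c)$ is a common $\frac{1}{d-k+2}$-centerpoint of the projected masses in $V^\perp$, and the centerpoint theorem turns this into the problem of finding $W=V^\perp\in G_{d-k+1}(\mathbb{R}^d)$ with $\bigcap_i C_i^W\neq\emptyset$. The genuine gap is in the test-map step that follows. First, the configuration space is ill-defined: an orthonormal $k$-frame in $\mathbb{R}^{d+1}$ gives $k$ affine hyperplanes in $\mathbb{R}^d$ whose common intersection is generically a $(d-k)$-flat, not a $(k-1)$-flat; a $(k-1)$-flat contained in all $k$ bounding hyperplanes exists only when $k-1\le d-k$ and is not canonically determined by the frame, and in the Ham-Sandwich case $k=d$ it does not exist at all. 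Second, even granting some flat $g$, your map $\Phi$ records only \emph{one} halfspace per mass, whereas the center-transversal condition quantifies over the whole $S^{d-k}$ of halfspace directions containing $g$; a frame where no coordinate of $\Phi$ is negative (even exploiting the sign-change symmetry) says nothing about the other halfspaces through $g$, so the claimed ``witness'' deduction is a non sequitur. This is precisely why the classical arguments do not use a finite-coordinate test map here but instead keep the full convex centerpoint regions.

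The natural completion of your own first step is the Dol'nikov--\v{Z}ivaljevi\'c route that the paper follows in generalized form: view $W\mapsto C_i^W$ as a convex, compact, multivalued section of the canonical bundle $\gamma_{d-k+1}^{d}$ over $G_{d-k+1}(\mathbb{R}^d)$ (or of $\vartheta_l^{m+l}$ over the flag manifold, as in Corollary~\ref{Cor:GeneralMasses}), reduce to single-valued sections by the selection argument of \cite{zivaljevic} (Proposition 1 there), and prove that any $k$ sections must coincide somewhere because the top Stiefel--Whitney class $w_{d-k+1}(\gamma_{d-k+1}^d)^{k-1}$ does not vanish --- this coincidence statement is exactly what Lemma~\ref{Lem:Sections} establishes in the flag-manifold setting. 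Your final paragraph gestures at a cohomological non-vanishing, but it is attached to the wrong bundle and base (the canonical $k$-plane bundle over $G(k,d+1)$ rather than the $(d-k+1)$-plane bundle over $G_{d-k+1}(\mathbb{R}^d)$) and, more importantly, to a reduction that cannot certify the centerpoint property. As written, the argument does not go through; repairing it essentially means replacing the Stiefel-manifold test map by the coincidence-of-sections argument.
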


We call such an affine subspace a \emph{$(k-1,d)$-center transversal}.
For $k=d$, we get the statement of the Ham-Sandwich theorem.
Further, for $k=1$, we get another well-known result in geometry, the so called \emph{Centerpoint theorem} \cite{rado}.

In this work we will consider two different generalizations of the Ham-Sandwich theorem.
The first one is about Ham-Sandwich cuts in linear subspaces.
More precisely, we define a \emph{mass assignment} on $G_k(\mathbb{R}^d)$ as a continuous assignment $\mu: G_k(\mathbb{R}^d)\rightarrow M_k$, where $G_k(\mathbb{R}^d)$ denotes the \emph{Grassmann manifold} consisting of all $k$-dimensional linear subspaces of $\mathbb{R}^d$ and $M_k$ denotes the space of all $k$-dimensional mass distributions.
In other words, $\mu$ continuously assigns a mass distribution $\mu^h:=\mu(h)$ to each $k$-dimensional linear subspace $h$ of $\mathbb{R}^d$.
Examples of mass assignments include projections of higher dimensional mass distributions to $h$ or the volume of intersections of $h$ with (sufficiently smooth) higher dimensional geometric objects.
Also, mass distributions in $\mathbb{R}^d$ can be viewed as mass assignments on $G_d(\mathbb{R}^d)$.
In fact, in this paper, we will use the letter $\mu$ both for mass distributions as well as for mass assignments.
The Ham-Sandwich theorem says that on every subspace we can simultaneously bisect the images of $k$ mass assignments.
But as there are many degrees of freedom in choosing subspaces, it is conceivable that there is some subspace on which we can simultaneously bisect more than $k$ images of mass assignments.
We will show that this is indeed the case, even for the more general notion of center transversals:

\begin{restatable}{theorem}{center}
\label{Thm:CenterTransversalsSubspaces}
Let $\mu_1,\ldots,\mu_{n+d-k}$ be mass assignments on $G_k(\mathbb{R}^d)$, where $n\leq k\leq d$.
Then there exists a $k$-dimensional linear subspace $h$ such that $\mu_1^h,\ldots, \mu_{n+d-k}^h$ have a common $(n-1,k)$-center transversal.
\end{restatable}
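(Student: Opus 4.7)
The plan is to extend the equivariant-topological proof of the classical center transversal theorem (Dol'nikov; Živaljević-Vrećica) by enlarging its configuration space over the Grassmannian $G_k(\mathbb{R}^d)$, so that the extra $d-k$ mass assignments beyond the classical count of $n$ are absorbed by the $k(d-k)$ extra degrees of freedom obtained from varying the subspace $h$.

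First, I would build a configuration space $X$ as a fiber bundle over $G_k(\mathbb{R}^d)$ whose fiber over $h$ is the equivariant configuration space $Y_h$ used in the classical proof of the $(n-1,k)$-center transversal theorem for mass distributions in $h \cong \mathbb{R}^k$ (concretely, a Stiefel-type manifold of frames inside $h$, together with an extra direction accounting for affine translation, acted on by a Weyl-type group $W$ that permutes and reorients the frame vectors). Structurally $X$ is obtained by associating this fiberwise construction to the tautological $k$-plane bundle $\gamma_k \to G_k(\mathbb{R}^d)$, and the $W$-action on $X$ is fiberwise.

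Next, I would define a $W$-equivariant test map $F: X \to V^{\oplus(n+d-k)}$, where $V$ is the per-mass test representation of $W$ from the classical construction. For $(h,y) \in X$, the fiber datum $y$ encodes a candidate $(n-1)$-flat $g \subset h$ together with associated halfspace directions, and $F$ records the classical test values obtained from $\mu_1^h, \ldots, \mu_{n+d-k}^h$ on this data. A common zero of $F$ at $(h,y)$ corresponds exactly to a common $(n-1,k)$-center transversal of $\mu_1^h, \ldots, \mu_{n+d-k}^h$, so the whole problem reduces to showing that $F$ must vanish somewhere.

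Finally, I would rule out that $F$ is everywhere non-zero by an obstruction argument. If $F$ admitted no zero, it would descend to a map $X/W \to V^{\oplus(n+d-k)} \setminus \{0\}$, obstructed by the $W$-equivariant Euler class of the trivial bundle $X \times V^{\oplus(n+d-k)}$ in $H^*(X/W; \mathbb{F}_2)$. By Leray-Hirsch for $X/W \to G_k(\mathbb{R}^d)$, this Euler class factors as the product of a fiberwise factor from $n$ of the masses (non-zero by the classical proof) and a Stiefel-Whitney polynomial in the characteristic classes of $\gamma_k$ arising from the extra $d-k$ masses. The main obstacle I expect is the cohomological non-vanishing of this product: one must identify the precise polynomial dictated by the construction and, using the Borel presentation of $H^*(G_k(\mathbb{R}^d); \mathbb{F}_2)$ together with the vanishing of classes above real dimension $k(d-k)$, verify that no Grassmannian relation forces it to zero. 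Dimensionally the hypothesis $n \le k \le d$ provides just enough room; confirming that no accidental cancellation kills the obstruction is the delicate point.
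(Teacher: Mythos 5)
There is a genuine gap, in fact two. First, your reduction ``a common zero of $F$ at $(h,y)$ corresponds exactly to a common $(n-1,k)$-center transversal'' is asserted but not constructed, and it is not available in the simple form you describe: the center transversal condition (every halfspace containing the flat $g$ carries at least a $\frac{1}{k-n+2}$-fraction of each mass) is an inequality over infinitely many halfspaces, not the vanishing of a natural single-valued, $W$-equivariant test function. This is precisely why the classical proofs (Dol'nikov, \v{Z}ivaljevi\'c--Vre\'cica) do not use a frame-plus-translation test map with a Weyl group; they instead encode each mass by its \emph{centerpoint region}, a convex compact multivalued section of a canonical bundle, and apply a selection/intersection result to reduce everything to a statement about sections of that bundle. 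The paper follows this route: it works over the complete flag manifold $\tilde V_{d,d}$, projects $\mu_i^{V_k}$ to $V_{k-(n-1)}$, and applies a lemma stating that any $m+1$ sections of the canonical bundle $\vartheta_l^{m+l}$ agree at some flag (with $l=k-n+1$, $m=d-k+n-1$), combined with the multivalued-section machinery of \v{Z}ivaljevi\'c. Without an explicit, correct test-map scheme, your configuration-space setup does not yet reduce the theorem to a topological statement.

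Second, even granting such a scheme, the entire content of the theorem sits in the cohomological non-vanishing you defer as ``the delicate point.'' Your Leray--Hirsch factorization into a fiberwise class times a Stiefel--Whitney polynomial in $\gamma_k$ is a heuristic, not a computation; for Grassmannians the relations in $H^*(G_k(\mathbb{R}^d);\mathbb{F}_2)$ routinely kill products of this kind, so nothing short of an explicit identification and non-vanishing argument closes the proof. In the paper this step is Lemma~\ref{Lem:Sections}: the obstruction is the top Stiefel--Whitney class $w_l(\vartheta_l^{m+l})^m$ of an $m$-fold Whitney sum over the flag manifold, and its non-vanishing is reduced, via the Borel picture $H^*(\tilde V_{m+l,m+l})\cong \mathbb{Z}_2[t_1,\ldots,t_{m+l}]/(\sigma_1,\ldots,\sigma_{m+l})$, to the known algebraic fact that $(t_1\cdots t_l)^m$ does not lie in the ideal $(\sigma_1,\ldots,\sigma_{m+l})$. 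Your proposal neither identifies the analogous class over $G_k(\mathbb{R}^d)$ nor proves it non-zero, so as written it is a plan for a proof rather than a proof; the flag-manifold formulation is what makes the decisive computation tractable, and you would need either to import it or to carry out the Grassmannian computation in full.
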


In particular, for $k=n$ we get that there is always a subspace on which we can simultaneously bisect $d$ images of mass assignments.
This result will only be proved in Section \ref{Sec:transversal}.
First we will look at a conjecture by Barba \cite{Luis} which motivated this generalization:
Let $\ell$ and $\ell'$ be two lines in $\mathbb{R}^3$ in general position.
We say that $\ell$ is above $\ell'$ if the unique vertical line that intersects both $\ell$ and $\ell'$ visits first $\ell$ and then $\ell'$ when traversed from top to bottom.

\begin{conjecture}
\label{Conj:luis}
Given three sets $R, B$ and $G$ of lines in $\mathbb{R}^3$ in general position, each with an even number of lines, there is a line $\ell$ in $\mathbb{R}^3$ such that $\ell$ lies below exactly $|R|/2$ lines of $R$, $|B|/2$ lines of $B$ and $|G|/2$ lines of $G$.
That is, there is some \emph{Ham-Sandwich line} that simultaneously bisects (with respect to above-below relation) the lines of $R, B$ and $G$.
\end{conjecture}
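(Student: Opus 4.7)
The plan is to deduce Conjecture~\ref{Conj:luis} from Theorem~\ref{Thm:CenterTransversalsSubspaces} applied with $n=k=2$ and $d=3$: this gives, for any three mass assignments on $G_2(\mathbb{R}^3)$, a $2$-plane $h$ through the origin together with a line $\ell\subset h$ that simultaneously bisects the three images in $h$. Since pairs $(h,\ell)$ with $h\in G_2(\mathbb{R}^3)$ and $\ell\subset h$ an affine line form a $(2+2)$-dimensional family and the affine Grassmannian of lines in $\mathbb{R}^3$ is also $4$-dimensional, this parametrisation covers every candidate line up to the codimension-one stratum of lines through the origin, which can be absorbed by a generic perturbation of $R\cup B\cup G$.

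The central step is to construct mass assignments $\mu_R,\mu_B,\mu_G$ on $G_2(\mathbb{R}^3)$ with the property that a line $\ell\subset h$ bisects $\mu_C^h$ precisely when it is above exactly $|C|/2$ lines of $C$ in $\mathbb{R}^3$, for $C\in\{R,B,G\}$. For generic (non-horizontal) $h$ and generic $\ell'$, let $\tilde{p}^h_{\ell'}:=\ell'\cap h\in h$; a direct height comparison at the common vertical line shows that $\ell\subset h$ lies above $\ell'$ iff the intersection $\ell\cap\pi^v_h(\ell')$ lands on the open ray $B^h_{\ell'}$ emanating from $\tilde{p}^h_{\ell'}$ in the direction along which $\ell'$ descends below $h$. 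I would then take $\mu_C^h$ to be a superposition, over $\ell'\in C$, of unit masses concentrated on narrow tubes around the rays $B^h_{\ell'}$, shaped (for example, with an exponentially decaying length profile along the ray and a thin Gaussian profile across it) so that, in the limit of sufficiently thin tubes, a line $\ell\subset h$ bisects $\mu_C^h$ iff it crosses exactly $|C|/2$ of the rays $B^h_{\ell'}$---which by the above characterisation is the same as bisecting $C$ by above--below.

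Feeding these three mass assignments into Theorem~\ref{Thm:CenterTransversalsSubspaces} then produces $h$ and $\ell\subset h$ simultaneously bisecting all three, and unwinding the construction yields a line in $\mathbb{R}^3$ above exactly half of each of $R,B,G$. The main difficulty will be the continuity of $h\mapsto\mu_C^h$ across the degenerate strata of $G_2(\mathbb{R}^3)$: when $h$ becomes horizontal the vertical projection collapses, when $\ell'$ is parallel to $h$ the point $\tilde{p}^h_{\ell'}$ escapes to infinity and $B^h_{\ell'}$ degenerates, and the orientation of $B^h_{\ell'}$ flips across the locus where $\ell'$ becomes parallel to $h$. All of these loci are codimension one, so the construction must be regularised---for instance by compactifying $h$ to $\mathbb{RP}^2$ and continuing the rays through the line at infinity, or by an explicit perturbation-and-limit argument controlling the combinatorial flips---and this regularisation is where the real technical work will lie, before Theorem~\ref{Thm:CenterTransversalsSubspaces} can be invoked cleanly.
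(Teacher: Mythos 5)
There is a genuine gap, and it sits exactly at the step you defer. The paper does not use Theorem~\ref{Thm:CenterTransversalsSubspaces} here; it uses Theorem~\ref{Thm:HamSandwichSubspaces} with $d=3$, $k=2$ and $e_1$ the vertical direction, precisely so that the plane $h$ produced is \emph{vertical}. In a vertical plane the above--below relation is trivial to encode: $\ell\subset h$ lies below a line $\ell'$ if and only if the single point $\ell'\cap h$ lies above $\ell$ inside $h$, so one may simply thicken each line into a thin cylinder, intersect with $h$ (and a large disk to keep the mass finite), and a Ham-Sandwich cut of the resulting point-like masses does the job after a short general-position case analysis. Your route through Theorem~\ref{Thm:CenterTransversalsSubspaces} gives no control on the orientation of $h$, and for a non-vertical $h$ the relation ``$\ell$ above $\ell'$'' is, as you correctly compute, ``$\ell\cap m$ lies on a ray $B^h_{\ell'}$ of the line $m=\pi^v_h(\ell')$'' --- equivalently, ``$\tilde p^h_{\ell'}$ lies on the side of $\ell$ opposite to the descent direction $u_{\ell'}$''. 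That side depends both on the direction of $\ell$ and on $\ell'$ (different $\ell'$ have different descent directions in $h$), so it is not a fixed half-plane condition, and your attempted encoding by tube masses along the rays does not repair this: a line crossing a ray does \emph{not} split that ray's unit mass in half (the split depends on where along the decaying profile the crossing occurs), while an uncrossed ray dumps its whole mass on one side. Hence ``$\ell$ bisects $\mu_C^h$'' is not equivalent, even in the thin-tube limit, to ``$\ell$ crosses exactly $|C|/2$ of the rays'': e.g.\ a line crossing no ray but with half the rays entirely on each side bisects the mass, and a line crossing exactly half the rays near their apices typically does not. So the central reduction from the above--below count to a mass bisection on a general $h$ fails.

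The continuity problems you flag are also not merely technical residue: the ray $B^h_{\ell'}$ flips direction and $\tilde p^h_{\ell'}$ escapes to infinity as $h$ crosses the locus where $\ell'$ is parallel to $h$, so the proposed $h\mapsto\mu_C^h$ is genuinely discontinuous there, and Theorem~\ref{Thm:CenterTransversalsSubspaces} requires a mass assignment continuous on all of $G_2(\mathbb{R}^3)$. The fix is the one the paper builds in from the start: restrict to $(k-1)$-horizontal (vertical) subspaces via Theorem~\ref{Thm:HamSandwichSubspaces} --- whose whole point is that one may pin the vertical direction at the cost of masses ($d-k+2=3$ suffices here) --- and then handle the remaining degeneracies (a cylinder missing the disk because a line is parallel to $h_0$, collinear triples, $\ell$ passing through points) by the general-position assumptions and a small rotation or translation of $\ell$. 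Your proposal would need either to prove a ``vertical'' version of your encoding (which is then exactly Theorem~\ref{Thm:HamSandwichSubspaces} plus the paper's argument) or a correct measure-theoretic encoding of the mixed-side ray condition on arbitrary planes, which the current tube construction does not provide.
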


It should be mentioned that Barba et al. have shown that the analogous statement for four sets of lines is false \cite{Luis}.
The conjecture can also be phrased in a slightly different terminology: Given three sets $R, B$ and $G$ of lines in $\mathbb{R}^3$ in general position, each with an even number of lines, there is a vertical plane $h$ such that $R\cap h$, $B\cap h$ and $G\cap h$ can be simultaneously bisected by a line in $h$.
Here, $h$ is not restricted to contain the origin, but it is restricted to be vertical, i.e., it has to be parallel to the $z$-axis.
We will prove a stronger statement of this conjecture by showing that $h$ can always be chosen to contain the origin.

More generally, in the setting of mass assignments, we show that at the cost of some masses, we can always fix $k-1$ vectors in the considered subspaces.
Without loss of generality, we assume that these vectors are vectors of the standard basis of $\mathbb{R}^d$.
We say that a linear subspace of $\mathbb{R}^d$ is \emph{$m$-horizontal}, if it contains $e_1,\ldots,e_m$, where $e_i$ denotes the $i$'th unit vector of $\mathbb{R}^d$, and we denote the space of all $m$-horizontal, $k$-dimensional subspaces of $\mathbb{R}^d$ by $Hor_k^m(\mathbb{R}^d)$.

\begin{restatable}{theorem}{hamsand}
\label{Thm:HamSandwichSubspaces}
Let $\mu_1,\ldots,\mu_{d-k+2}$ be mass assignments on $Hor_k^{k-1}(\mathbb{R}^d)$, where $2\leq k\leq d$.
Then there exists a $k$-dimensional $(k-1)$-horizontal linear subspace $h$ where $\mu_1^h,\ldots, \mu_{d-k+2}^h$ have a common Ham-Sandwich cut.
\end{restatable}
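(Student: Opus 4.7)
The plan is to identify the parameter space of pairs (subspace, oriented affine hyperplane) with a sphere bundle over $\mathbb{RP}^{d-k}$, build a $\mathbb{Z}/2$-equivariant test map with $d-k+2$ coordinates, and rule out a nowhere-vanishing such map via the projective bundle formula together with a Stiefel--Whitney calculation.

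First, identify $Hor_k^{k-1}(\mathbb{R}^d)$ with $\mathbb{RP}^{d-k}$: each $(k-1)$-horizontal $k$-plane $h$ is determined by a line $[v]$ in the orthogonal complement $\mathrm{span}(e_k,\ldots,e_d)$, via $h = \mathrm{span}(e_1,\ldots,e_{k-1}) \oplus [v]$. Over this base, for each $h$ the space of oriented affine hyperplanes in $h$, together with the ``hyperplane at infinity,'' is the unit sphere $S^k$ of $h \oplus \mathbb{R}$ (unit normal plus offset). Globally these fibers assemble into $\tilde X := S(\xi)$, the unit sphere bundle of the rank-$(k+1)$ bundle $\xi := \varepsilon^k \oplus \gamma_1$ over $\mathbb{RP}^{d-k}$, where $\varepsilon^k$ is the trivial bundle and $\gamma_1$ is the tautological line bundle; reversing the orientation of $H$ corresponds to the fiberwise antipodal $\mathbb{Z}/2$-action.

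Next, define the continuous test map $F:\tilde X \to \mathbb{R}^{d-k+2}$ whose $i$-th coordinate equals $\mu_i^h(H^+) - \mu_i^h(H^-)$. By construction $F$ is $\mathbb{Z}/2$-equivariant for the antipodal action on the target, and a common Ham-Sandwich cut for $\mu_1^h,\ldots,\mu_{d-k+2}^h$ is exactly a zero of $F$ at an honest hyperplane; the hyperplane at infinity is automatically ruled out, since there the $i$-th coordinate of $F$ equals $\pm\mu_i^h(\mathbb{R}^k) \ne 0$. Suppose for contradiction that $F$ is nowhere zero; then $F/\|F\|$ descends to a map $g:\mathbb{RP}(\xi) \to \mathbb{RP}^{d-k+1}$ of quotients which pulls the tautological line bundle back to $\gamma_1(\mathbb{RP}(\xi))$. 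Setting $\alpha := w_1(\gamma_1(\mathbb{RP}(\xi)))$ and using $H^{d-k+2}(\mathbb{RP}^{d-k+1};\mathbb{Z}/2) = 0$, naturality forces $\alpha^{d-k+2} = 0$.

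To contradict this, invoke the projective bundle formula. Since $w(\xi) = w(\gamma_1) = 1 + x$ with $x$ the generator of $H^1(\mathbb{RP}^{d-k};\mathbb{Z}/2)$,
\[
H^*(\mathbb{RP}(\xi);\mathbb{Z}/2) \;=\; H^*(\mathbb{RP}^{d-k};\mathbb{Z}/2)[\alpha]\big/\bigl(\alpha^{k+1} + x\alpha^k\bigr).
\]
Iterating $\alpha^{k+1} = x\alpha^k$ gives $\alpha^{k+j} = x^j \alpha^k$, which is nonzero whenever $0 \le j \le d-k$, since $\{1,\alpha,\ldots,\alpha^k\}$ is a free $H^*(\mathbb{RP}^{d-k})$-module basis. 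Hence $\alpha^m \ne 0$ for all $0 \le m \le d$; the hypothesis $k \ge 2$ yields $d-k+2 \le d$, so $\alpha^{d-k+2} \ne 0$, the desired contradiction, and the zero of $F$ thus produced is the required Ham-Sandwich cut. The main obstacle is pinning down the correct bundle description of $\tilde X$; once $\tilde X$ is recognised as $S(\varepsilon^k \oplus \gamma_1)$ over $\mathbb{RP}^{d-k}$, the rest reduces to a clean cup-length computation, with the bound $k \ge 2$ appearing exactly at the boundary where the argument would otherwise fail.
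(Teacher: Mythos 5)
Your proposal is correct, but it follows a genuinely different route from the paper. The paper parametrizes the problem by pairs (horizontal subspace, oriented line through the origin inside it), identifies this configuration space explicitly as a quotient of $S^{d-k}\times S^{k-2}\times[0,1]$, builds a $(d-k+1)$-dimensional test map from differences of the bisecting points along the oriented line, and rules out a nowhere-zero equivariant map by a bespoke Borsuk--Ulam-type lemma proved with a degree-parity homotopy argument (a map of odd degree cannot be homotopic to one of even degree). You instead lift affine hyperplanes in each subspace to the unit sphere of $h\oplus\mathbb{R}$, recognise the total configuration space as the sphere bundle $S(\varepsilon^k\oplus\gamma_1)$ over $\mathbb{R}P^{d-k}\cong Hor_k^{k-1}(\mathbb{R}^d)$, use the $(d-k+2)$-coordinate measure-difference test map, and kill the nowhere-zero hypothesis by a cup-length computation via the projective bundle formula, where the relation $\alpha^{k+1}=x\alpha^k$ gives $\alpha^m\neq 0$ for $m\le d$ and the hypothesis $k\ge 2$ enters exactly as $d-k+2\le d$ --- the same place the paper needs $k\ge 2$ for $S^{k-2}$ to be nonempty. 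Your version buys a few things: it sidesteps the paper's selection of a bisecting point (midpoint of an interval when non-unique) since the measure-difference map is automatically continuous and single-valued, it avoids the explicit homeomorphism lemma, and it is stylistically closer to the Stiefel--Whitney machinery the paper deploys in its later sections, so it would extend naturally to center-transversal variants. The paper's argument, in exchange, is more elementary (no characteristic classes or Leray--Hirsch needed, just degrees of maps of spheres) and works on a configuration space of one dimension lower. One cosmetic remark: the fiber $S(h\oplus\mathbb{R})$ contains two points at infinity, not one, but your argument handles both identically since $F$ evaluates to $\pm\mu_i^h(\mathbb{R}^k)\neq 0$ there.
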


This result will be proved in Section \ref{Sec:horizontal}.
The proof of Conjecture \ref{Conj:luis} follows, after some steps to turn the lines into mass assignments, from the case $d=3$ and $k=2$.
This will be made explicit in Section \ref{Sec:lines}.

The second generalization of the Ham-Sandwich theorem that we investigate in this paper considers bisections with several cuts, where the masses are distributed into two parts according to a natural 2-coloring of the induced arrangement.
More precisely, let $\mathcal{L}$ be a set of oriented hyperplanes.
For each $\ell\in\mathcal{L}$, let $\ell^+$ and $\ell^-$ denote the positive and negative side of $\ell$, respectively (we consider the sign resulting from the evaluation of a point in these sets into the linear equation defining $\ell$).
For every point $p\in\mathbb{R}^d$, define $\lambda(p):=|\{\ell\in\mathcal{L}\mid p\in\ell^+\}|$ as the number of hyperplanes that have $p$ in their positive side.
Let $R^+:=\{p\in\mathbb{R}^d\mid \lambda(p) \text{ is even}\}$ and $R^-:=\{p\in\mathbb{R}^d\mid \lambda(p) \text{ is odd}\}$.
More intuitively, this definition can also be understood the following way: if $C$ is a cell in the hyperplane arrangement induced by $\mathcal{L}$, and $C'$ is another cell sharing a facet with $C$, then $C$ is a part of $R^+$ if and only if $C'$ is a part of $R^-$.
See Figure \ref{Fig:Checkers} for an example.
A similar setting, where the directions of the hyperplanes are somewhat restricted, has been studied by several authors \cite{Chess1, Chess3, Chess2}.

\begin{figure}
\centering
\includegraphics[scale=0.7]{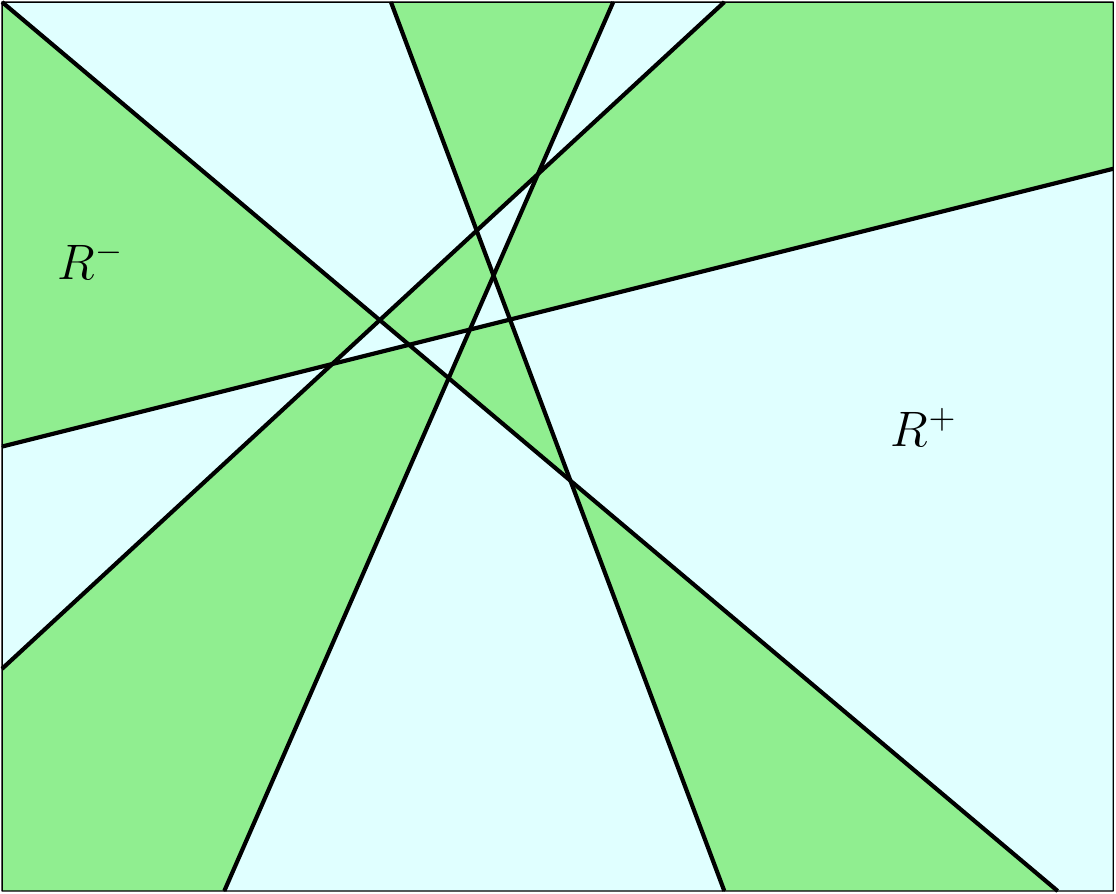}
\caption{The regions $R^+$ (light blue) and $R^-$ (green).}
\label{Fig:Checkers}
\end{figure}

We say that $\mathcal{L}$ \emph{bisects} a mass distribution $\mu$ if $\mu(R^+)=\mu(R^-)$.
Note that reorienting one hyperplane just maps $R^+$ to $R^-$ and vice versa.
In particular, if a set $\mathcal{L}$ of oriented hyperplanes simultaneously bisects a family of mass distributions $\mu_1,\ldots,\mu_k$, then so does any set $\mathcal{L'}$ of the same hyperplanes with possibly different orientations.
Thus we can ignore the orientations and say that a set $\mathcal{L}$ of (undirected) hyperplanes simultaneously bisects a family of mass distributions if some orientation of the hyperplanes does.
Langerman \cite{Stefan} conjectured the following:

\begin{conjecture}
\label{Conj:langerman}
Any $dn$ mass distributions in $\mathbb{R}^d$ can be simultaneously bisected by $n$ hyperplanes.
\end{conjecture}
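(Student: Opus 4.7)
The plan is to set up the standard configuration-space/test-map paradigm. Parametrize an oriented affine hyperplane $\{x\in\mathbb{R}^d : a_0 + a_1x_1+\cdots+a_dx_d=0\}$ by the unit vector $(a_0,\ldots,a_d)\in S^d$; then $n$-tuples of oriented hyperplanes are parametrized by $(S^d)^n$, on which $G := \mathbb{Z}_2^n$ acts coordinate-wise by antipody, i.e.\ by reorienting each hyperplane. Define the test map
\[
F\colon (S^d)^n \longrightarrow \mathbb{R}^{dn},\qquad F(\mathcal{L}) := \bigl(\mu_i(R^+_{\mathcal{L}}) - \mu_i(R^-_{\mathcal{L}})\bigr)_{i=1}^{dn}.
\]
As already noted in the introduction, reorienting any single hyperplane globally swaps $R^+_{\mathcal{L}}$ with $R^-_{\mathcal{L}}$, so each generator of $G$ acts on $\mathbb{R}^{dn}$ by $-\mathrm{id}$. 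Consequently $F$ is equivariant along the summation homomorphism $\sigma\colon \mathbb{Z}_2^n\twoheadrightarrow\mathbb{Z}_2$, and a simultaneous bisection is precisely a zero of $F$.

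The problem thus reduces to ruling out any $G$-equivariant map $(S^d)^n \to S^{dn-1}$, where $G$ acts on the target antipodally via $\sigma$. For $n=1$ this is Borsuk--Ulam and recovers the Ham-Sandwich theorem. For general $n$ I would first try the Fadell--Husseini ideal-valued index with $\mathbb{F}_2$ coefficients: compute the index of the source by the Künneth formula on $(S^d)^n$ and compare it with the Euler class of the target representation. As a secondary strategy, I would try an inductive reduction on $n$: freeze the direction of one hyperplane, split the $dn$ masses into a group of $d$ and a group of $d(n-1)$, and use Theorem~\ref{Thm:CenterTransversalsSubspaces} to bisect the first group on a continuously chosen subspace, then recurse on the complementary half-spaces using the frozen direction as a sweeping parameter.

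The principal obstacle is the equivariant topology step. The $G$-action on the target factors through a single $\mathbb{Z}_2$, so the kernel $\ker\sigma$ of rank $n-1$ acts trivially on $\mathbb{R}^{dn}$ but nontrivially on $(S^d)^n$; the natural obstruction therefore descends to the quotient $(S^d)^n/\ker\sigma$, which is much smaller than the source. In homological terms, the top $\mathbb{F}_2$-class of $(S^d)^n$ factors as a product of $n$ classes each of degree $d$, while the Euler class of the target representation is a single class of degree $dn$, and there is no way to pair them using only the residual $\mathbb{Z}_2$ that the target sees. This mismatch is exactly why the conjecture is open for $n\ge 2$ beyond a handful of small cases, and it is also why the paper announces only a relaxed version: imposing additional structure on the cuts (for instance, forcing several hyperplanes to share a common flat, in the spirit of Theorem~\ref{Thm:HamSandwichSubspaces}) constrains the configuration space enough that an index computation can actually go through.
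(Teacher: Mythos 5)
You have not proved the statement, and indeed you could not have expected to: in the paper this is Conjecture~\ref{Conj:langerman}, which remains open; the paper itself only establishes the relaxed Theorem~\ref{Thm:almost_pizza} (almost simultaneous bisection), via a quite different route --- sections of the product bundle $(\gamma_1^d)^n$ over $G_1(\mathbb{R}^d)^n$ (Lemma~\ref{lem:product}), the quasi-section Lemma~\ref{lem:quasisection} to handle the non-uniqueness of bisecting positions, and a lift to $\mathbb{R}^{d+1}$ to pass from linear to affine hyperplanes --- and it even exhibits the unit-disk example at the end of Section~\ref{Sec:pizza} to explain why its own method stops short of the full conjecture. Your proposal is a plan, not a proof, and its decisive step fails concretely. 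The test-map setup is fine: zeros of $F$ on $(S^d)^n$ are simultaneous bisections, and $F$ is equivariant along $\sigma\colon\mathbb{Z}_2^n\to\mathbb{Z}_2$. But the nonexistence of a $\sigma$-equivariant map $(S^d)^n\to S^{dn-1}$ is exactly what is unavailable. Carrying out the Fadell--Husseini computation you sketch: the index of $(S^d)^n$ with the free coordinatewise action is the ideal $(t_1^{d+1},\ldots,t_n^{d+1})\subset\mathbb{F}_2[t_1,\ldots,t_n]$, while the relevant Euler class of the target representation (on which $\mathbb{Z}_2^n$ acts through $\sigma$) is $(t_1+\cdots+t_n)^{dn}$. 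Modulo the ideal, the only possibly surviving monomial is $t_1^d\cdots t_n^d$, whose coefficient is the multinomial $\binom{dn}{d,\ldots,d}$ reduced mod $2$; already for $d=n=2$ this is $6\equiv 0$, so the obstruction vanishes and the index argument proves nothing in general. This is not merely a technical nuisance you could expect to engineer around within this scheme: since $\ker\sigma$ acts trivially on the target, equivariant maps are only obstructed through the single residual $\mathbb{Z}_2$, which is the mismatch you yourself identify.

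Your secondary, inductive strategy also does not work as described. The two regions $R^+$ and $R^-$ are determined by the parity of the whole arrangement, so the contribution of a mass is not localized in ``complementary half-spaces'' of the first cut: adding or moving the later hyperplanes re-partitions every cell, and Theorem~\ref{Thm:CenterTransversalsSubspaces} controls bisections of assigned masses inside a chosen subspace, not how a fixed $d$-dimensional mass is split by a hyperplane arrangement whose remaining members are still free. If you want to pursue the problem along the lines of this paper, the honest target is the gap the author points out explicitly: one would need to replace the quasi-section argument of Lemma~\ref{lem:quasisection} (which forces the ``almost'' relaxation, i.e.\ allowing one hyperplane to be dropped per mass) by an argument showing that a nowhere-zero choice inside the solution set would already contradict the Stiefel--Whitney class computation, rather than constructing an actual section inside it.
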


For $n=1$, this is again the Ham-Sandwich theorem.
For $d=1$, this conjecture is also true, this result is known as the \emph{Necklace splitting theorem} \cite{Hobby, Matousek}.
Recently, the conjecture has been proven for several values of $n$ and $d$ \cite{pizza_cccg, Bereg, pizza2, pizza1}, but it is still open in its full generality.
In this work, we will not prove this conjecture, but we will consider a relaxed version of it:
We say that $\mathcal{L}$ \emph{almost bisects} $\mu$ if there is an $\ell\in\mathcal{L}$ such that $\mathcal{L}\setminus\{\ell\}$ bisects $\mu$.
For a family of mass distributions $\mu_1,\ldots,\mu_k$ we say that $\mathcal{L}$ \emph{almost simultaneously bisects} $\mu_1,\ldots,\mu_k$ if for every $i\in\{1,\ldots,k\}$ $\mathcal{L}$ either bisects or almost bisects $\mu_i$.
See Figure \ref{Fig:almost_bisect} for an illustration.
In this relaxed setting, we are able to prove the following:

\begin{restatable}{theorem}{pizza}
\label{Thm:almost_pizza}
Let $\mu_1,\ldots,\mu_{dn}$ be $dn$ mass distributions in $\mathbb{R}^d$. Then there are $n$ hyperplanes that almost simultaneously bisect $\mu_1,\ldots,\mu_{dn}$.
\end{restatable}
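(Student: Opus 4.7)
The plan is to reduce the statement to the existence of a common zero of a $\mathbb{Z}_2^n$-equivariant test map on a product of spheres, and then to invoke the same Borsuk--Ulam-type machinery that underlies Theorems \ref{Thm:CenterTransversalsSubspaces} and \ref{Thm:HamSandwichSubspaces}. I parametrise an $n$-tuple of oriented hyperplanes in $\mathbb{R}^d$ by $\mathcal{L}=(\ell_1,\ldots,\ell_n)\in(S^d)^n$, writing each $\ell_k=(a_k,b_k)\in S^d\subseteq\mathbb{R}^{d+1}$ for the oriented hyperplane $\{x\in\mathbb{R}^d:a_k\cdot x=b_k\}$. For a mass $\mu_i$ and an index $j\in\{0,1,\ldots,n\}$ set
\[
\phi_i^j(\mathcal{L})=\mu_i(R^+(\mathcal{L}_j))-\mu_i(R^-(\mathcal{L}_j)),
\]
where $\mathcal{L}_0=\mathcal{L}$ and $\mathcal{L}_j=\mathcal{L}\setminus\{\ell_j\}$ for $j\geq 1$; each $\phi_i^j$ is continuous in $\mathcal{L}$. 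By the very definition of almost bisection, $\mathcal{L}$ almost bisects $\mu_i$ iff at least one of $\phi_i^0,\ldots,\phi_i^n$ vanishes, so setting $F_i(\mathcal{L})=\prod_{j=0}^n\phi_i^j(\mathcal{L})$ the almost simultaneous bisection of all $dn$ masses is equivalent to a common zero of $F=(F_1,\ldots,F_{dn})\colon (S^d)^n\to\mathbb{R}^{dn}$. The dimensions are balanced: $\dim(S^d)^n=dn=\dim\mathbb{R}^{dn}$.

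To force such a zero I exploit the natural $\mathbb{Z}_2^n$-action on $(S^d)^n$ given by antipodal flips on each factor, which correspond to reversing the orientation of a single hyperplane. A direct check shows that flipping $\ell_k$ negates every $\phi_i^j$ whose subarrangement $\mathcal{L}_j$ still contains $\ell_k$ (that is, $j=0$ or $j\neq k$) and leaves $\phi_i^k$ invariant; consequently each single flip multiplies $F_i$ by $(-1)^n$. When the parity of $n$ prevents $F$ from being fully equivariant with respect to the desired linear representation on $\mathbb{R}^{dn}$, I would multiply each $F_i$ by an auxiliary continuous odd function in the $\ell_k$'s, chosen so that its zero set does not create spurious solutions on the complement of the true almost-bisection locus. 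Once $F$ is equivariant with respect to a suitable nontrivial linear representation on $\mathbb{R}^{dn}$, the Borsuk--Ulam-type result from the paper -- the one driving the proofs in Sections \ref{Sec:transversal} and \ref{Sec:horizontal} -- should yield a common zero, which translates back to the desired arrangement.

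The main obstacle is in making this topological step actually go through: one has to show that the chosen equivariant target representation on $\mathbb{R}^{dn}$ pairs with the free $\mathbb{Z}_2^n$-action on $(S^d)^n$ to give a nonvanishing equivariant Euler class in $H^*_{\mathbb{Z}_2^n}((S^d)^n;\mathbb{Z}_2)=\mathbb{Z}_2[t_1,\ldots,t_n]/(t_i^{d+1})$. The naive target where each coordinate transforms by the ``total antipodal'' character reduces the obstruction to the multinomial coefficient $\binom{dn}{d,\ldots,d}$ modulo $2$, which vanishes in many cases, so the product test map alone is not strong enough in general. The delicate point of the plan is therefore to choose the auxiliary correction or, equivalently, to fibre the almost-bisection problem over an appropriate Grassmannian so that the parametrised Borsuk--Ulam theorem developed for the subspace results applies directly; given that, the reduction between the almost-bisection condition and the vanishing of $\prod_{j=0}^n\phi_i^j$ is entirely routine, as is the final translation of zeros into $n$-tuples of hyperplanes.
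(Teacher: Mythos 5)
Your reduction of ``almost simultaneous bisection'' to a common zero of $F=(F_1,\ldots,F_{dn})$ with $F_i=\prod_{j=0}^n\phi_i^j$ is fine, but the step that actually has to produce the zero is missing, and it is not a technicality. As you yourself compute, reorienting one hyperplane negates exactly $n$ of the $n+1$ factors of $F_i$, so each flip multiplies $F_i$ by $(-1)^n$. For even $n$ the map $F$ is therefore $\mathbb{Z}_2^n$-\emph{invariant}, and no Borsuk--Ulam-type statement applies at all. For odd $n$ every coordinate transforms by the same (total) character, so the obstruction is the top equivariant class $(t_1+\cdots+t_n)^{dn}$ in $\mathbb{Z}_2[t_1,\ldots,t_n]/(t_1^{d+1},\ldots,t_n^{d+1})$, whose relevant coefficient is $\binom{dn}{d,\ldots,d}\bmod 2$; by Kummer's theorem this is even whenever $n\geq 2$ and $d\geq 1$ (adding $d$ to itself already forces a carry), not merely ``in many cases.'' So the naive product test map yields nothing beyond $n=1$, i.e.\ the Ham-Sandwich theorem, and the proposed repairs --- an unspecified ``auxiliary odd function'' whose zero set must miraculously avoid creating spurious solutions, or an unspecified fibering over a Grassmannian --- are exactly the content that would have to be supplied. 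Note also that the paper contains no single parametrised Borsuk--Ulam theorem that could be quoted here; the tools are Stiefel--Whitney class computations (Lemma \ref{Lem:Sections}, Lemma \ref{lem:product}) which require honest \emph{sections} of vector bundles, and your product-of-discrepancies construction does not produce sections.

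The paper's route is structurally different and is built around precisely the difficulty your product trick hides. It first proves an origin-constrained version (Theorem \ref{Thm:pizza_origin}) over $B=G_1(\mathbb{R}^{d})^n$: for each mass one hyperplane is swept along its normal line while the others are fixed, the bisection locus along the sweep is in general a disconnected ``quasi-section,'' and the new Lemma \ref{lem:quasisection} extracts a genuine section whose zeros lie in $Z(V_f)$ --- where the \emph{unbounded} fibers of the quasi-section (the sub-arrangement without the swept hyperplane already bisects) are exactly what forces the ``almost'' relaxation. These $(d-1)$ sections of the product bundle $(\gamma_1^{d})^n$ are then killed simultaneously by the Stiefel--Whitney computation of Lemma \ref{lem:product}, and Theorem \ref{Thm:almost_pizza} follows by the cheap lift of $\mathbb{R}^d$ to the affine hyperplane $x_{d+1}=1$ in $\mathbb{R}^{d+1}$. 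Your proposal engages with none of this: in particular it never explains where the ``almost'' is actually used, whereas in the paper it is the precise price paid for turning a disconnected solution set into a section. As it stands the proposal is a plan with the decisive topological step absent.
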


\begin{figure}
\centering
\includegraphics[scale=0.7]{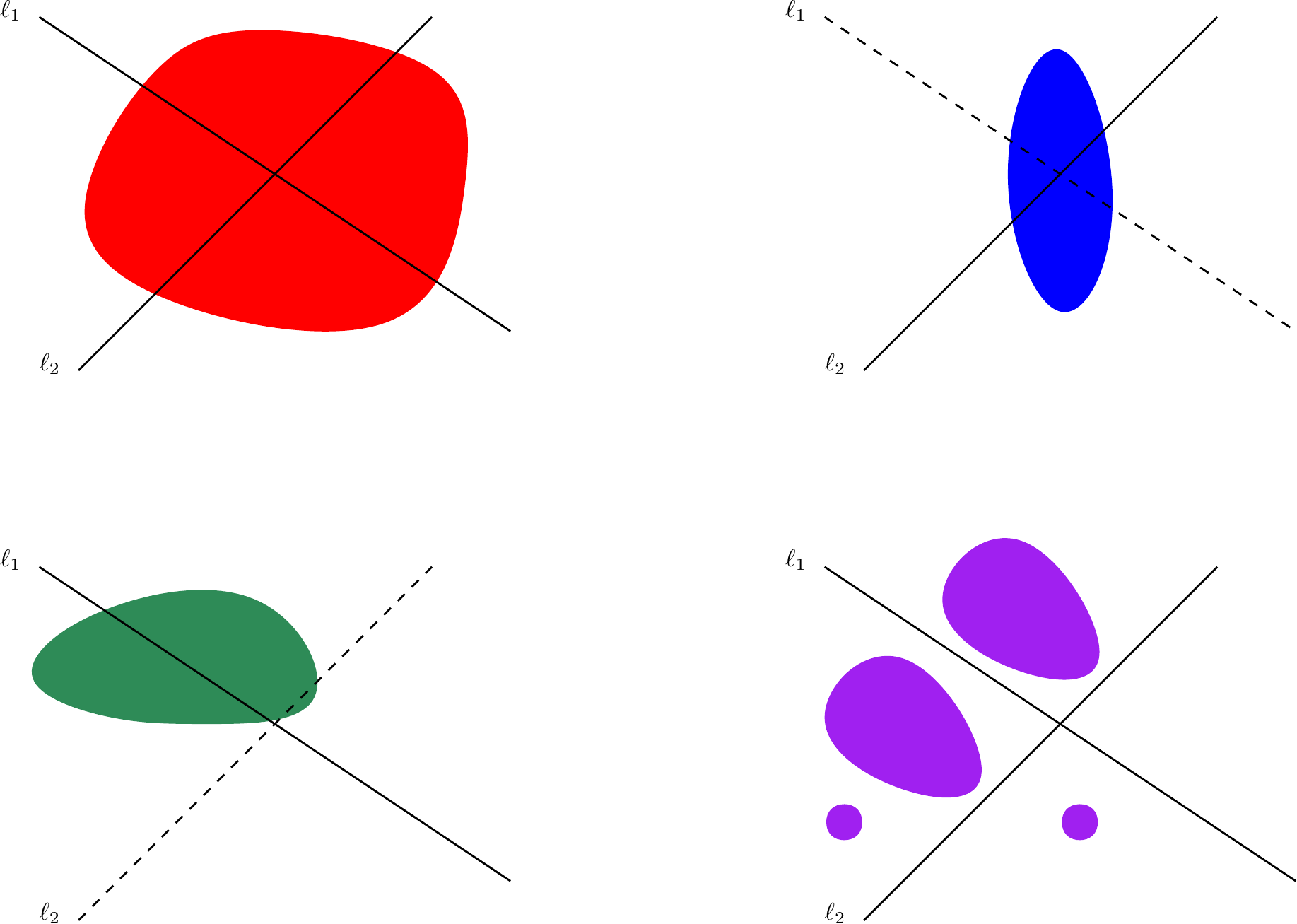}
\caption{The lines $\ell_1$ and $\ell_2$ almost simultaneously bisect four masses.}
\label{Fig:almost_bisect}
\end{figure}

We hope that our methods might extend to a proof of Conjecture \ref{Conj:langerman}.
We will first prove a similar result where we enforce that all bisecting hyperplanes contain the origin.
The general version then follows from lifting the problem one dimension higher.
The proof is based on the following idea: for each mass, $n-1$ of the hyperplanes define two regions, one we take with positive sign, the other with negative sign.
This defines a so called \emph{charge} (a mass distribution, which unfortunately is locally negative, which is why we will need the relaxed setting).
The $n$'th hyperplane should now bisect this new mass distribution.
However, this $n$'th hyperplane now again changes the other mass distributions, so in the end we want to guarantee that there are $n$ hyperplanes such that all of them correctly bisect the masses.
More precisely, let $G_{d-1}(\mathbb{R}^d)^n$ be the space of all sets of $n$ hyperplanes containing the origin (i.e., linear subspaces) in $\mathbb{R}^d$.
Similar to before, we define a mass assignment $\mu$ on $G_{d-1}(\mathbb{R}^d)^n$ as a continuous assignment $G_{d-1}(\mathbb{R}^d)^n\rightarrow M_d$, where $M_d$ again denotes the space of all $d$-dimensional mass distributions.
In other words, $\mu$ continuously assigns a mass distribution $\mu^p:=\mu(p)$ to $\mathbb{R}^d$ for each $p=(h_1,\ldots,h_n)\in Gr_{d-1}(\mathbb{R}^d)^n$.
An example of such mass assignments could be the intersection of a fixed $d$-dimensional mass distribution with the Minkowski sum of the hyperplanes with a unit ball.
In Section \ref{Sec:product}, we will prove the following:

\begin{restatable}{theorem}{product}
\label{Thm:product}
Let $\mu_1,\ldots,\mu_{(d-1)n}$ be $(d-1)n$ mass assignments on $G_{d-1}(\mathbb{R}^d)^n$.
Then there exists $p=(h_1,\ldots,h_n)\in  Gr_{d-1}(\mathbb{R}^d)^n$ such that for every $i\in\{1,\ldots,n\}$, the hyperplane $h_i$ simultaneously bisects $\mu_{(d-1)(i-1)+1}^p,\ldots,\mu_{(d-1)i}^p$.
\end{restatable}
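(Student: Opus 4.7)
The plan is to convert the theorem into the problem of finding a zero of a continuous $(\mathbb{Z}/2)^n$-equivariant map on a product of spheres, and then apply a product version of the Borsuk--Ulam theorem.

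First I would parametrize $G_{d-1}(\mathbb{R}^d)$ by unit normals: to each $v \in S^{d-1}$ associate the oriented hyperplane $h(v) := \{x \in \mathbb{R}^d : \langle v, x \rangle = 0\}$ with positive side $h(v)^+ := \{x : \langle v, x \rangle > 0\}$. Writing $p = (v_1,\dots,v_n) \in (S^{d-1})^n$ and $[p] = (h(v_1),\dots,h(v_n)) \in G_{d-1}(\mathbb{R}^d)^n$ for the projection to the Grassmannian product, I would define, for each $i \in \{1,\dots,n\}$ and $l \in \{1,\dots,d-1\}$,
\[
  F_{(d-1)(i-1)+l}(p) := \mu_{(d-1)(i-1)+l}^{[p]}\bigl(h(v_i)^+\bigr) - \mu_{(d-1)(i-1)+l}^{[p]}\bigl(h(v_i)^-\bigr),
\]
and stack these into a single continuous map $F \colon (S^{d-1})^n \to \mathbb{R}^{(d-1)n}$. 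Continuity follows from continuity of the mass assignments and the fact that mass distributions vanish on lower-dimensional subsets, so small perturbations of $v_i$ change $\mu_j^{[p]}(h(v_i)^\pm)$ only slightly. A zero of $F$ is precisely a point $p$ at which, for every $i$, the hyperplane $h(v_i)$ simultaneously bisects $\mu_{(d-1)(i-1)+1}^{[p]},\dots,\mu_{(d-1)i}^{[p]}$, which is the conclusion sought.

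The relevant symmetry is $(\mathbb{Z}/2)^n$ acting on $(S^{d-1})^n$ by antipody on each factor and on $\mathbb{R}^{(d-1)n}$ by sign change on the $i$-th block of $d-1$ coordinates. Flipping $v_i \mapsto -v_i$ leaves $[p]$ and every $\mu_j^{[p]}$ unchanged while swapping $h(v_i)^+$ with $h(v_i)^-$, and does not affect any block $i' \ne i$ since neither the hyperplane $h(v_{i'})$ nor its orientation changes. Hence $F$ is $(\mathbb{Z}/2)^n$-equivariant.

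It remains to show that such an equivariant $F$ must vanish somewhere. This is a product Borsuk--Ulam statement: there is no $(\mathbb{Z}/2)^n$-equivariant map $(S^{d-1})^n \to S^{(d-1)n-1}$ for this block action. The associated real vector bundle $E := (S^{d-1})^n \times_{(\mathbb{Z}/2)^n} \mathbb{R}^{(d-1)n}$ over the base $(\mathbb{RP}^{d-1})^n$ splits as $\bigoplus_{i=1}^n L_i^{\oplus(d-1)}$, where $L_i$ is the pullback of the tautological line bundle under the $i$-th projection. Since $H^*((\mathbb{RP}^{d-1})^n;\mathbb{Z}/2) = \mathbb{Z}/2[a_1,\dots,a_n]/(a_1^d,\dots,a_n^d)$, the top Stiefel--Whitney class $w_{(d-1)n}(E) = \prod_{i=1}^n a_i^{d-1}$ is nonzero, which forces every section of $E$ to have a zero and so produces the desired configuration $p$. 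The main obstacle I expect is exactly this product Borsuk--Ulam step: one must verify both the splitting of $E$ and the non-vanishing of $\prod a_i^{d-1}$, and then justify that a nowhere-vanishing continuous equivariant lift cannot exist. An alternative route is induction on $n$ combining the classical Borsuk--Ulam theorem with a parametric Ham-Sandwich argument, but handling non-unique solutions in the induction step appears more delicate than the bundle calculation above.
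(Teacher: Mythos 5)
Your proposal is correct, and at its core it relies on the same topological obstruction as the paper: the non-vanishing of the top Stiefel--Whitney class of $\bigoplus_{i=1}^n L_i^{\oplus(d-1)}$ over $(\mathbb{RP}^{d-1})^n\cong G_{d-1}(\mathbb{R}^d)^n$. The packaging differs in two minor but pleasant ways. First, where the paper builds sections of $(\gamma_1^d)^n$ by sweeping each hyperplane along its normal line and recording the translate that bisects the relevant mass (which requires a convention for non-unique bisecting translates), you define the test map directly as the vector of measure differences $\mu_j^{[p]}(h(v_i)^+)-\mu_j^{[p]}(h(v_i)^-)$ on $(S^{d-1})^n$; this is cleaner, avoids the choice issue, and your verification of the block-wise $(\mathbb{Z}/2)^n$-equivariance (flipping $v_i$ fixes $[p]$ and all other blocks, and negates block $i$) is exactly what makes the map descend to a section of the associated bundle $E$. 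Second, where the paper proves a more general lemma for $(\gamma_m^d)^k$ by quoting Dol'nikov's computation $w_{qm}(\Gamma_m^d)=1$ and applying K\"unneth, you only need the line-bundle case $m=1$ and compute the class elementarily from the splitting $E\cong\bigoplus_i L_i^{\oplus(d-1)}$ and the ring $\mathbb{Z}/2[a_1,\ldots,a_n]/(a_1^d,\ldots,a_n^d)$, obtaining $\prod_i a_i^{d-1}\neq 0$. The paper itself remarks that the $m=1$ case follows from a Borsuk--Ulam theorem on products of spheres, which is precisely the statement you prove; the only thing the paper's route buys in addition is the generality of its Lemma for arbitrary $m$, which the theorem at hand does not need.
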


We then use the underlying topological result to prove Theorem \ref{Thm:almost_pizza} in Section \ref{Sec:pizza}.
All the results are proved using topological methods, and the underlying topological results might be of independent interest.
For an introduction to topological methods, we refer to the books by Matou{\v{s}}ek \cite{Matousek} and de Longueville \cite{Longueville}.
Most of the proofs in this work use so-called Stiefel-Whitney classes of vector bundles.
The standard reference for this concept is the classic book by Milnor and Stasheff \cite{milnor}.

\section{Ham Sandwich Cuts in horizontal subspaces}
\label{Sec:horizontal}

In order to prove Theorem \ref{Thm:HamSandwichSubspaces}, we establish a few preliminary lemmas.
Consider the following space, which we denote by $F_{hor}$: the elements of $F_{hor}$ are pairs $(h,\overrightarrow{\ell})$, where $h$ is an (unoriented) $k$-dimensional $(k-1)$-horizontal linear subspace of $\mathbb{R}^d$ and $\overrightarrow{\ell}$ is an oriented $1$-dimensional linear subspace of $h$, that is, an oriented line in $h$ through the origin.
The space $F_{hor}$ inherits a topology from the Stiefel manifold.
Furthermore, inverting the orientation of $\overrightarrow{\ell}$ is a free $\mathbb{Z}_2$-action, giving $F_{hor}$ the structure of a $\mathbb{Z}_2$-space.

We will first give a different description of the space $F_{hor}$.
Define
\[ F':=\faktor{S^{d-k}\times S^{k-2}\times[0,1]}{(\approx_0,\approx_1)},\]
where $(x,y,0)\approx_0 (x,y',0)$ for all $y,y'\in S^{k-2}$ and $(x,y,1)\approx_1 (-x,y,1)$ for all $x\in S^{d-k}$.
Further, define a free $\mathbb{Z}_2$-action on $F'$ by $-(x,y,t):=(-x,-y,t)$.
We claim that the $\mathbb{Z}_2$-space $F'$ is "the same" as $F_{hor}$:

\begin{lemma}
\label{Lem:spaces}
There is a $\mathbb{Z}_2$-equivariant homeomorphism between $F'$ and $F_{hor}$.
\end{lemma}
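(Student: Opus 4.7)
The plan is to construct an explicit continuous surjection from $S^{d-k}\times S^{k-2}\times[0,1]$ onto $F_{hor}$, check that it is constant on the equivalence classes generated by $\approx_0$ and $\approx_1$ so that it descends to a continuous map $\bar\phi\colon F'\to F_{hor}$, verify $\mathbb{Z}_2$-equivariance, and finally invoke that a continuous bijection from a compact space to a Hausdorff space is a homeomorphism.

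Set $V:=\mathrm{span}(e_1,\dots,e_{k-1})$ and $W:=V^\perp=\mathrm{span}(e_k,\dots,e_d)$. I view $S^{d-k}$ as the unit sphere in $W$ and $S^{k-2}$ as the unit sphere in $V$. Then every $(k-1)$-horizontal $k$-dimensional subspace has the form $V\oplus\mathrm{span}(x)$ for some $x\in S^{d-k}$ (unique up to sign), and I define
\[
\phi(x,y,t):=\Bigl(V\oplus\mathrm{span}(x),\;\overrightarrow{\cos(\tfrac{\pi t}{2})\,x+\sin(\tfrac{\pi t}{2})\,y}\Bigr),
\]
where the arrow denotes the orientation given by the (unit) direction vector. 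Continuity is immediate. At $t=0$ the direction vector reduces to $x$, independent of $y$, matching $\approx_0$; at $t=1$ it reduces to $y\in V$, and since $\mathrm{span}(x)=\mathrm{span}(-x)$, the output is invariant under $x\mapsto -x$, matching $\approx_1$. Hence $\phi$ descends to a continuous $\bar\phi\colon F'\to F_{hor}$. For equivariance, $\bar\phi(-x,-y,t)$ uses the direction $-(\cos(\tfrac{\pi t}{2})\,x+\sin(\tfrac{\pi t}{2})\,y)$, so it differs from $\bar\phi(x,y,t)$ exactly by an orientation reversal of $\overrightarrow{\ell}$.

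For bijectivity I construct the inverse geometrically. Given $(h,\overrightarrow{\ell})\in F_{hor}$, let $v$ be the unit direction of $\overrightarrow{\ell}$ and orthogonally decompose $v=v_W+v_V$ with $v_W\in h\cap W$ and $v_V\in V$. Write $v_W=\alpha\,x$ with $x\in S^{d-k}$ spanning $h\cap W$ and $v_V=\beta\,y$ with $y\in S^{k-2}$, choosing signs so that $\alpha,\beta\geq 0$; then $\alpha^2+\beta^2=1$ and a unique $t\in[0,1]$ satisfies $\cos(\tfrac{\pi t}{2})=\alpha$, $\sin(\tfrac{\pi t}{2})=\beta$. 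The two cases where the normalization is ambiguous are exactly $\beta=0$ (then $y$ is arbitrary, $t=0$, absorbed by $\approx_0$) and $\alpha=0$ (then the sign of $x$ is arbitrary, $t=1$, absorbed by $\approx_1$), so the assignment descends to a well-defined inverse $F_{hor}\to F'$ of $\bar\phi$. Since $F'$ is a quotient of a compact space and $F_{hor}$ inherits a Hausdorff topology from the Stiefel manifold, the continuous bijection $\bar\phi$ is automatically a homeomorphism.

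The only delicate step is the bookkeeping at the endpoints $t=0,1$: the whole argument hinges on the observation that the two ambiguities in the decomposition $v=\alpha x+\beta y$ match the two identifications in the definition of $F'$ precisely when $\alpha$ or $\beta$ vanishes. Once this correspondence is spelled out, there is no further topological difficulty, and in particular no need to check continuity of the inverse by hand.
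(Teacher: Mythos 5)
Your proof is correct and follows essentially the same route as the paper's: an explicit parametrization of $F_{hor}$ by $S^{d-k}\times S^{k-2}\times[0,1]$ that collapses exactly along $\approx_0$ and $\approx_1$ and is checked to be $\mathbb{Z}_2$-equivariant (the paper interpolates linearly via $(1-t)x+ty$ where you use $\cos(\tfrac{\pi t}{2})x+\sin(\tfrac{\pi t}{2})y$, an immaterial difference). Your closing step --- a continuous bijection from the compact quotient $F'$ to the Hausdorff space $F_{hor}$ is a homeomorphism --- together with the explicit geometric inverse is, if anything, a slightly cleaner justification than the paper's appeal to $\varphi$ being open and closed, but the substance of the argument is the same.
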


\begin{proof}
Consider the subspace $Y\subset\mathbb{R}^d$ spanned by $e_1,\ldots,e_{k-1}$.
The space of unit vectors in $Y$ is homeomorphic to $S^{k-2}$.
Similarly let $X\subset\mathbb{R}^d$ be spanned by $e_k,\ldots,e_{d}$.
Again, the space of unit vectors in $X$ is homeomorphic to $S^{d-k}$.
In a slight abuse of notation, we will write $y$ and $x$ both for a unit vector in $Y$ and $X$ as well as for the corresponding points in $S^{k-2}$ and $S^{d-k}$, respectively.

We first construct a map $\varphi$ from $S^{d-k}\times S^{k-2}\times[0,1]$ to $F_{hor}$ as follows: for every $x\in S^{d-k}$ let $h(x)$ be the unique $(k-1)$-horizontal subspace spanned by $x,e_1,\ldots,e_{k-1}$.
See Figure \ref{Fig:space_homeo} for an illustration.
Note that $h(-x)=h(x)$.
Further, define $v(x,y,t):=(1-t)x+ty$ and let $\overrightarrow{\ell}(x,y,t)$ be the directed line defined by the vector $v(x,y,t)$.
Note that $\overrightarrow{\ell}(x,y,t)$ lies in the plane spanned by $x$ and $y$ and thus also in $h(x)$.
Finally, set $\varphi(x,y,t):=(h(x),\overrightarrow{\ell}(x,y,t))$.
Both $h$ and $v$ are both open and closed continuous maps, and thus so is $\varphi$.
Also, we have that $v(-x-y,t)=-(1-t)x-ty=-v(x,y,t)$, so $\varphi$ is $\mathbb{Z}_2$-equivariant.

Note that for $t=0$ we have $v(x,y,0)=x$, so $\varphi(x,y,0)$ does not depend on $y$, and in particular $\varphi(x,y,0)=\varphi(x,y',0)$ or all $y,y'\in S^{k-2}$.
Similarly, for $t=1$ we have $v(x,y,1)=y$ and $h(-x)=h(x)$, and thus $\varphi(x,y,1)=\varphi(-x,y,1)$ for all $x\in S^{d-k}$.
Hence, $\varphi$ induces a map $\varphi'$ from $F'$ to $F_{hor}$ which is still open, closed, continuous and $\mathbb{Z}_2$-equivariant.
Finally, it is easy to see that $\varphi'$ is bijective.
Thus, $\varphi'$ is a $\mathbb{Z}_2$-equivariant homeomorphism between $F'$ and $F_{hor}$, as required.
\end{proof}

\begin{figure}
\centering
\includegraphics[scale=0.7]{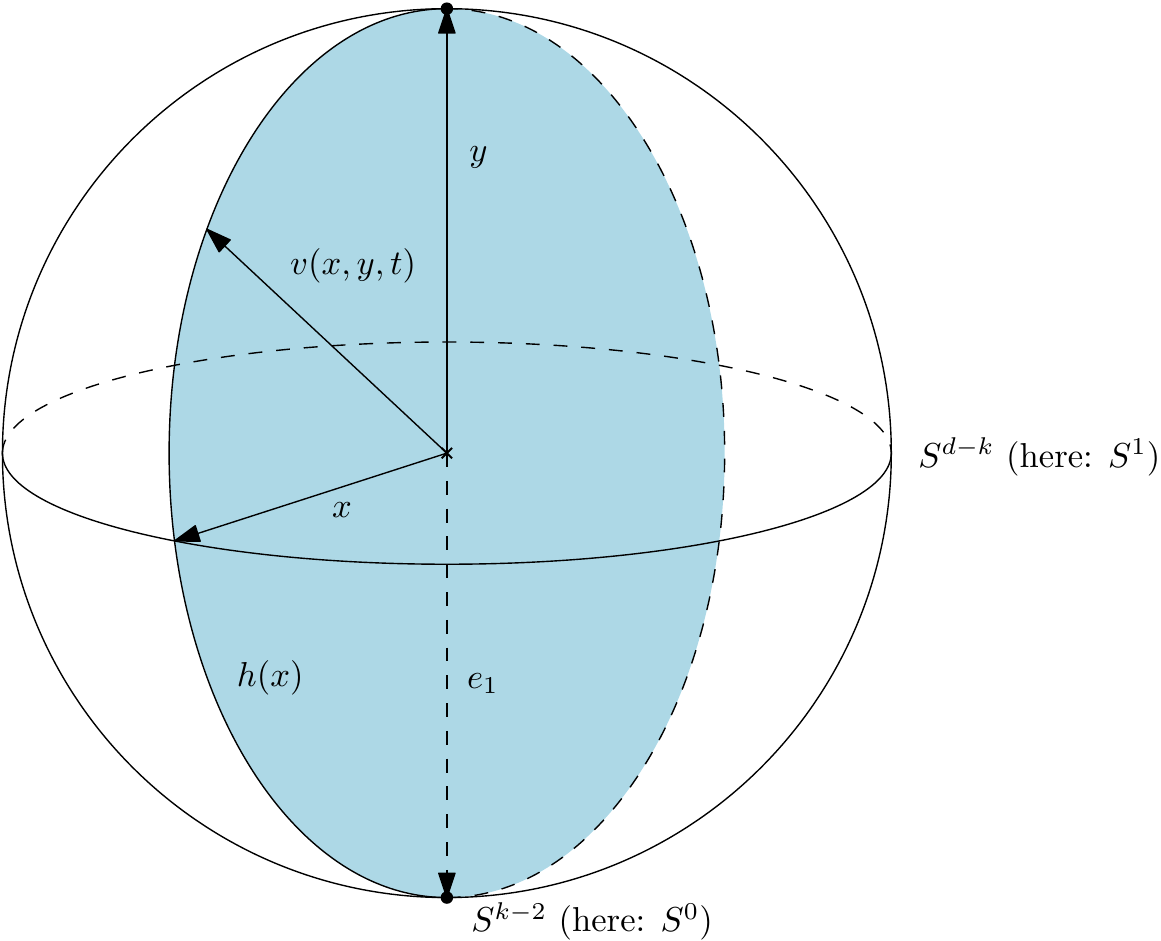}
\caption{The map $\varphi$ for $d=3$ and $k=2$.}
\label{Fig:space_homeo}
\end{figure}

We now prove a Borsuk-Ulam-type statement for $F_{hor}$.

\begin{lemma}
\label{Lem:butype}
There is no $\mathbb{Z}_2$-map $f: F_{hor} \rightarrow S^{d-k}$.
\end{lemma}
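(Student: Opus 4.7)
My strategy is a standard Borsuk--Ulam-type argument via Stiefel--Whitney classes: I would show that the first Stiefel--Whitney class $w$ of the double cover $F_{hor}\to F_{hor}/\mathbb{Z}_2$ satisfies $w^{d-k+1}\neq 0$ in $H^*(F_{hor}/\mathbb{Z}_2;\mathbb{Z}_2)$. Given this, a hypothetical $\mathbb{Z}_2$-map $f\colon F_{hor}\to S^{d-k}$ would descend to $\bar f\colon F_{hor}/\mathbb{Z}_2\to\mathbb{RP}^{d-k}$ with $\bar f^{*}\alpha=w$, where $\alpha$ is the degree-one generator of $H^*(\mathbb{RP}^{d-k};\mathbb{Z}_2)$; since $\alpha^{d-k+1}=0$, this would force $w^{d-k+1}=0$, a contradiction.

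To identify the double cover concretely I would view $F_{hor}/\mathbb{Z}_2$ as a projective bundle. The space of $(k-1)$-horizontal $k$-subspaces $h\subset\mathbb{R}^d$ is naturally $G_1(\mathbb{R}^d/\mathrm{span}(e_1,\ldots,e_{k-1}))\cong\mathbb{RP}^{d-k}$; over it sits the tautological rank-$k$ bundle $\xi$ with fiber $h$. Because every such $h$ canonically contains $\mathrm{span}(e_1,\ldots,e_{k-1})$ as a trivial sub-bundle, $\xi$ splits as $\xi\cong\underline{\mathbb{R}^{k-1}}\oplus L$, where $L$ is the tautological line bundle on $\mathbb{RP}^{d-k}$; hence $w(\xi)=1+a$ with $a:=w_1(L)$ generating $H^*(\mathbb{RP}^{d-k};\mathbb{Z}_2)$. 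Points of $F_{hor}/\mathbb{Z}_2$ are pairs $(h,\ell)$ with $\ell\subset h$ an unoriented line, so $F_{hor}/\mathbb{Z}_2=\mathbb{P}(\xi)$, and orienting $\ell$ gives exactly the sphere bundle of the tautological line sub-bundle $\gamma\subset\pi^{*}\xi$ over $\mathbb{P}(\xi)$. Hence the first Stiefel--Whitney class of the double cover equals $x:=w_1(\gamma)$.

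The projective bundle formula then yields
\[
H^*(\mathbb{P}(\xi);\mathbb{Z}_2)\;=\;\mathbb{Z}_2[a,x]\big/\bigl(a^{d-k+1},\;x^k+ax^{k-1}\bigr),
\]
which, as a $\mathbb{Z}_2$-vector space, has additive basis $\{a^ix^j:0\le i\le d-k,\;0\le j\le k-1\}$. Iterating $x^k=ax^{k-1}$ gives $x^{k+m}=a^{m+1}x^{k-1}$, so either $x^{d-k+1}$ is already a basis monomial (when $d-k+1\le k-1$), or else $x^{d-k+1}=a^{d-2k+2}x^{k-1}$ with $0\le d-2k+2\le d-k$ by $k\ge 2$ (when $d-k+1\ge k$). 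In both cases $x^{d-k+1}\ne 0$, which completes the argument via the reduction in the first paragraph.

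The main obstacle is the geometric identification in the second paragraph: checking that the forgetful map $F_{hor}/\mathbb{Z}_2\to\mathbb{RP}^{d-k}$ really is the projectivization $\mathbb{P}(\underline{\mathbb{R}^{k-1}}\oplus L)$ and that the orientation cover of $\gamma$ matches the cover $F_{hor}\to F_{hor}/\mathbb{Z}_2$ obtained by orienting $\ell$. Once that is in place, everything else is a mechanical cohomology calculation, and the explicit model of $F_{hor}$ from Lemma \ref{Lem:spaces} offers a useful consistency check (at $t=0$ the $x$-coordinate plays the role of $L$, while at $t=1$ the $y$-coordinate plays the role of the trivial summand).
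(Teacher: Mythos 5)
Your proof is correct, but it takes a genuinely different route from the paper's. The paper never computes any cohomology of $F_{hor}/\mathbb{Z}_2$: it uses the explicit $\mathbb{Z}_2$-homeomorphism $F_{hor}\cong F'$ from Lemma~\ref{Lem:spaces} and an elementary degree argument --- a hypothetical $\mathbb{Z}_2$-map would give, after fixing $y_0\in S^{k-2}$, a homotopy between a self-map of $S^{d-k}$ that is odd (hence of odd degree, from the identifications at $t=0$) and one satisfying $g(-x)=g(x)$ (hence of even degree, from the identifications at $t=1$), which is impossible. You instead identify the quotient $F_{hor}/\mathbb{Z}_2$ as the projectivization $\mathbb{P}(\underline{\mathbb{R}^{k-1}}\oplus L)$ over $\mathbb{RP}^{d-k}$, note that $F_{hor}$ is the sphere bundle of the tautological line bundle $\gamma$ there, and show via the projective bundle formula that $x:=w_1(\gamma)$, which is the characteristic class of the double cover, satisfies $x^{d-k+1}\neq 0$; since an equivariant map would force $x^{d-k+1}=\bar f^*(\alpha^{d-k+1})=0$, this is a contradiction. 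Your geometric identifications (the $(k-1)$-horizontal $k$-subspaces form $\mathbb{RP}^{d-k}$, the tautological bundle splits as $\underline{\mathbb{R}^{k-1}}\oplus L$ so $w(\xi)=1+a$, and the orientation cover of $\gamma$ is exactly $F_{hor}\to F_{hor}/\mathbb{Z}_2$) are all valid, and the case analysis is sound: $x^{d-k+1}$ is a basis monomial when $d-k+1\le k-1$, and otherwise $x^{d-k+1}=a^{d-2k+2}x^{k-1}$ with $0<d-2k+2\le d-k$, where you correctly use $k\ge 2$ (which is also implicitly required by the paper; for $k=1$ the lemma is false). Comparing the two: the paper's argument is more elementary, using only degree theory, but it depends on the concrete model $F'$; yours bypasses Lemma~\ref{Lem:spaces} entirely, is in the same Stiefel--Whitney-class spirit as Sections~\ref{Sec:transversal} and~\ref{Sec:product}, and yields slightly more, namely that the height of the first Stiefel--Whitney class of the cover (the cohomological $\mathbb{Z}_2$-index of $F_{hor}$) is at least $d-k+1$, not just the nonexistence of one particular equivariant map.
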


\begin{proof}
Assume for the sake of contradiction that $f$ exists.
Then, by Lemma \ref{Lem:spaces}, $f$ induces a map $F: S^{d-k}\times S^{k-2}\times[0,1]\rightarrow S^{d-k}$ with the following properties:
\medskip
\begin{enumerate}
\item[(1)] $F(-x,-y,t)=-F(x,y,t)$ for all $t\in (0,1)$;
\item[(2)] $F(x,y,0)=F(x,y',0)$ for all $y,y'\in S^{k-2}$ and $F(-x,y,0)=-F(x,y,0)$ for all $x\in S^{d-k}$;
\item[(3)] $F(x,-y,1)=-F(x,y,1)$ for all $y\in S^{k-2}$ and $F(-x,y,1)=F(x,y,1)$ for all $x\in S^{d-k}$.
\end{enumerate}
\medskip
In particular, $F$ is a homotopy between $f_0(x,y):=F(x,y,0)$ and $f_1(x,y):=F(x,y,1)$.
Fix some $y_0\in S^{k-2}$.
Then $F$ induces a homotopy between $g_0(x):=f_0(x,y_0)$ and $g_1(x):=f_1(x,y_0)$.
Note that $g_0: S^{d-k}\rightarrow S^{d-k}$ has odd degree by property (2).
On the other hand, $g_1: S^{d-k}\rightarrow S^{d-k}$ has even degree by property (3).
Thus, $F$ induces a homotopy between a map of odd degree and a map of even degree, which is a contradiction.
\end{proof}

We now have all tools that are necessary to prove Theorem \ref{Thm:HamSandwichSubspaces}.

\hamsand*

\begin{proof}
For each $\mu_i$ and $(h,\overrightarrow{\ell})$, consider the point $v_i$ on $\overrightarrow{\ell}$ for which the orthogonal hyperplane bisects $\mu_i^h$.
(If $v_i$ is not unique, the set of all possible such points is an interval, in which case we choose $v_i$ as the midpoint of this interval.)
This induces a continuous $\mathbb{Z}_2$-map $g:F_{hor}\rightarrow\mathbb{R}^{d-k+2}$.
For $i\in\{1\ldots,d-k+1\}$, set $w_i:=v_i-v_{d-k+2}$.
The $w_i$'s then induce a continuous $\mathbb{Z}_2$-map $f:F_{hor}\rightarrow\mathbb{R}^{d-k+1}$.
We want to show that there exists $(h,\overrightarrow{\ell})$ where $v_1=v_2=\ldots =v_{d-k+2}$, or equivalently, $w_1=\ldots,w_{d-k+1}=0$, i.e., $f$ has a zero.
Assume that this is not the case.
Then normalizing $f$ induces a $\mathbb{Z}_2$-map $f':F_{hor}\rightarrow S^{d-k}$, which is a contradiction to Lemma \ref{Lem:butype}.
\end{proof}

Note that the higher $k$ is chosen, the weaker our result.
In fact, for $k>\frac{d}{2}+1$, our result is weaker than what we would get from the Ham-Sandwich theorem.
We conjecture that this trade-off is not necessary:
\begin{conjecture}
Let $\mu_1,\ldots,\mu_{d}$ be mass assignments in $\mathbb{R}^d$ and $k\geq 2$.
Then there exists a $k$-dimensional $(k-1)$-horizontal linear subspace $h$ such that $\mu_1^h,\ldots, \mu_{d}^h$ have a common Ham-Sandwich cut.
\end{conjecture}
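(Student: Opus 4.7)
The plan is to extend the topological strategy of Theorem~\ref{Thm:HamSandwichSubspaces} by enlarging the configuration space to allow \emph{arbitrary} Ham-Sandwich cuts inside a $(k-1)$-horizontal subspace, rather than only cuts orthogonal to a chosen oriented line, and by replacing the elementary Borsuk--Ulam argument with a Stiefel--Whitney class computation. Let $\tilde{F}$ denote the compact space of pairs $(h,\overrightarrow{H})$, where $h$ is a $k$-dimensional $(k-1)$-horizontal linear subspace of $\mathbb{R}^d$ and $\overrightarrow{H}$ is an oriented affine hyperplane in $h$, compactified by adjoining, in each $h$, the oriented hyperplanes at infinity. Orientation reversal $\overrightarrow{H}\mapsto-\overrightarrow{H}$ is a free $\mathbb{Z}_2$-action, and for each $\mu_i$ the signed bisection error $f_i(h,\overrightarrow{H})=\mu_i^h(H^+)-\mu_i^h(H^-)$ is $\mathbb{Z}_2$-antisymmetric, so $f:=(f_1,\ldots,f_d)$ is a $\mathbb{Z}_2$-equivariant map $\tilde{F}\to\mathbb{R}^d$ whose zeros in the interior are the desired common Ham-Sandwich cuts. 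Writing $F:=\tilde{F}/\mathbb{Z}_2$ and letting $\ell\to F$ be the line bundle associated to this double cover, producing such a zero reduces to showing $w_d(\ell^{\oplus d})=w_1(\ell)^d\neq 0$ in $H^d(F;\mathbb{Z}_2)$.

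I would then identify $F$ as a projective bundle. Writing $h(x)=\mathrm{span}(e_1,\ldots,e_{k-1})\oplus\mathbb{R}x$ for $x\in S^{d-k}\subset\mathrm{span}(e_k,\ldots,e_d)$ and passing to $x\sim-x$, the base of subspaces is $\mathbb{RP}^{d-k}$, and the tautological family of subspaces is the rank-$k$ vector bundle $E=\mathbb{R}^{k-1}\oplus\gamma$ over $\mathbb{RP}^{d-k}$, where $\gamma$ is the tautological line bundle. Since the compactified space of oriented (resp.\ unoriented) affine hyperplanes in a $k$-dimensional vector space is $S^k$ (resp.\ $\mathbb{RP}^k$), one obtains intrinsically $\tilde{F}\cong S(E^*\oplus\mathbb{R})$ and $F\cong\mathbb{P}(E^*\oplus\mathbb{R})$, which after using $E^*\cong E$ become $\tilde{F}\cong S(\mathbb{R}^k\oplus\gamma)$ and $F\cong\mathbb{P}(\mathbb{R}^k\oplus\gamma)$. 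The double cover $\tilde{F}\to F$ is then the canonical sphere-to-projectivization map, and its associated line bundle $\ell$ is the tautological sub-line bundle, so $w_1(\ell)$ is the class conventionally denoted $u$ in the projective bundle formula.

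Applying that formula with $w(\mathbb{R}^k\oplus\gamma)=1+a$, where $a=w_1(\gamma)\in H^1(\mathbb{RP}^{d-k};\mathbb{Z}_2)$, gives
\[
H^*(F;\mathbb{Z}_2)\;\cong\;\mathbb{Z}_2[a,u]\,\big/\,\bigl(a^{d-k+1},\;u^{k+1}+au^k\bigr).
\]
Iterating $u^{k+1}=au^k$ yields $u^{k+m}=a^m u^k$ for all $m\geq 0$, and in particular $u^d=a^{d-k}u^k$. Because $F$ is a closed smooth $d$-manifold whose mod-$2$ top class is exactly $a^{d-k}u^k$, this element is nonzero; hence $w_d(\ell^{\oplus d})=u^d\neq 0$, the bundle $\ell^{\oplus d}$ admits no nonvanishing section, and $f$ must vanish somewhere on $\tilde{F}$. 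A zero cannot sit on the adjoined boundary, since there the positive half-space is either empty or all of $h$ and so $f_i=\pm\mu_i^h(h)\neq 0$; the zero therefore corresponds to a genuine affine Ham-Sandwich cut, as required.

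The main obstacle, in my view, is the global identification of $\ell$ with the tautological line bundle on $\mathbb{P}(\mathbb{R}^k\oplus\gamma)$. Restricting to a single fiber $\mathbb{RP}^k$ recovers the standard orientation cover $S^k\to\mathbb{RP}^k$ and forces $w_1(\ell)$ to restrict to the fiber generator, but this only determines $w_1(\ell)$ modulo classes pulled back from $\mathbb{RP}^{d-k}$, leaving open a possible correction $w_1(\ell)\stackrel{?}{=}u+a$. The intended resolution is to describe $\tilde{F}$ intrinsically as $S(E^*\oplus\mathbb{R})$ without choosing local trivializations, so that $\tilde{F}\to F$ is manifestly the canonical double cover associated with the tautological line bundle and the $x\sim-x$ identification in the base is absorbed by functoriality, without introducing an extra twist.
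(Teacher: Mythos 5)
This statement is not proved in the paper at all --- it is stated there as an open conjecture, the paper only establishing the weaker Theorem~\ref{Thm:HamSandwichSubspaces} ($d-k+2$ masses) via the degree argument of Lemma~\ref{Lem:butype} on the smaller configuration space $F_{hor}$ of cuts orthogonal to a marked line. So there is no ``paper proof'' to compare with; what you propose is a genuinely different and stronger route, and as far as I can check it is sound: enlarging the configuration space to all (compactified, oriented) affine cuts inside all $(k-1)$-horizontal subspaces gives a closed $d$-manifold $F\cong\mathbb{P}(E^*\oplus\mathbb{R})$ over $\mathbb{RP}^{d-k}$, matching the $d$ constraints exactly, and the computation $u^{d}=a^{d-k}u^{k}\neq 0$ in $\mathbb{Z}_2[a,u]/(a^{d-k+1},u^{k+1}+au^{k})$ is correct (nonvanishing follows already from Leray--Hirsch freeness, without invoking the fundamental class). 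You were right to isolate the identification $w_1(\ell)=u$ as the crux: if the correction $w_1(\ell)=u+a$ occurred, then $(u+a)^d=\bigl(\sum_{j=0}^{d-k}\binom{d}{j}\bigr)a^{d-k}u^k$, whose coefficient can vanish mod $2$, so the argument would genuinely break. But the identification does hold, and your intended resolution works: a point of $S(E^*\oplus\mathbb{R})$ is a point of $\mathbb{P}(E^*\oplus\mathbb{R})$ together with a unit vector in the corresponding tautological line $\lambda$, the deck transformation of $\tilde F\to F$ is the fiberwise antipodal map, and $[(v,t)]\mapsto([v],tv)$ gives a canonical isomorphism $\tilde F\times_{\mathbb{Z}_2}\mathbb{R}\cong\lambda$, so $\ell\cong\lambda$ and $w_1(\ell)=u$ with no twist. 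Alternatively one can rule out the twist by evaluating on loops: on the loop that fixes the cut equal to $\mathrm{span}(e_1,\ldots,e_{k-1})$ while rotating $x$ to $-x$, the orientation of the cut flips, so $w_1(\ell)$ evaluates to $1$, as does $u$, whereas $u+a$ evaluates to $0$; together with a fiber loop this pins down $w_1(\ell)=u$.

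Two remaining points are routine but should be said explicitly. First, continuity of $f$ at the two adjoined points at infinity over each $h$ needs the usual tightness argument (a continuous, hence locally weakly convergent, family of finite measures with converging total mass puts uniformly small mass outside a large ball, so $f_i\to\pm\mu_i^h(h)$ uniformly in the direction of escape), and your observation that $f\neq 0$ there is then correct, so zeros are interior and give genuine affine Ham--Sandwich cuts. Second, $E^*\cong E$ via a bundle metric (or simply note $w(E^*)=w(E)$ mod $2$). With these details filled in, your argument stays entirely within the toolkit the paper already uses in Sections~\ref{Sec:transversal}--\ref{Sec:product} (top Stiefel--Whitney classes forcing zeros of sections), and it would settle the conjecture; the trade-off in Theorem~\ref{Thm:HamSandwichSubspaces} comes precisely from its smaller configuration space $F_{hor}$, which your projective-bundle space replaces.
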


\section{Application: bisecting lines in space}
\label{Sec:lines}

Recall the setting of Conjecture \ref{Conj:luis}: Given three sets $R, B$ and $G$ of lines in $\mathbb{R}^3$ in general position, each with an even number of lines, is there a line $\ell$ in $\mathbb{R}^3$ such that $\ell$ lies below exactly $|R|/2$ lines of $R$, $|B|/2$ lines of $B$ and $|G|/2$ lines of $G$?
Here, general position means that (i) no two lines are parallel,  (ii) no line is vertical (i.e., parallel to the $z$-axis), (iii) no line intersects the $z$-axis and (iv) for any four lines, if there is a line intersecting all of them, the (unique) vertical plane containing this common intersecting line does not go through the origin.

We want to prove that there always is such a line $\ell$ using Theorem \ref{Thm:HamSandwichSubspaces}.
In order to apply Theorem \ref{Thm:HamSandwichSubspaces}, we need to define a mass assignment.
To this end, we replace every line $r$ in $R$ by a very thin infinite cylinder of radius $\varepsilon$, centered at $r$.
Denote the collection of cylinders obtained this way by $R^*$.
Define $B^*$ and $G^*$ analogously.
For each vertical plane $h$ through the origin, let $D_K^h$ be a disk in $h$ centered at the origin, with some (very large) radius $K$.
Define $\mu_R^h$ as $(R^*\cap h)\cap D_K^h$.
It is straightforward to show that $\mu_R^h$ is a mass assignment.
Analogously we can define mass assignments $\mu_B^h$ and $\mu_G^h$.
From Theorem \ref{Thm:HamSandwichSubspaces}, where we set $e_1$ to be the unit vector on the $z$-axis, we deduce that there is a vertical plane $h_0$ and a line $\ell\in h_0$ such that $\ell$ simultaneously bisects $\mu_R^{h_0}$, $\mu_B^{h_0}$ and $\mu_G^{h_0}$.
We claim that this $\ell$ gives a solution to Conjecture \ref{Conj:luis}.

To show this, we distinguish two cases:
The first case is that all the cylinders in $R^* \cup B^* \cup G^*$ intersect $D_K^{h_0}$.
In this case, it is a standard argument to show that $\ell$ is a Ham-Sandwich cut of the point set $(R \cup B \cup G)\cap h_0$.
Note that because of general position assumptions (ii) and (iv), at most one triple of points in $(R \cup B \cup G)\cap h_0$ is collinear.
As all three sets have an even number of lines, we thus have that $\ell$ either contains two points or no point at all.
Further, if it contains two points $p_1$ and $p_2$, then they must have the same color.
In this case, slightly rotate $\ell$ such that $p_1$ lies above $\ell$ and $p_2$ lies below $\ell$.
Now, in any case, $\ell$ is a Ham-Sandwich cut that contains no points.
In particular, $\ell$ lies below exactly $|R|/2$ lines of $R$, $|B|/2$ lines of $B$ and $|G|/2$ lines of $G$, which is what we required.

The second case is that some cylinders in $R^* \cup B^* \cup G^*$ do not intersect $D_K^{h_0}$.
By the general position assumption (i), choosing $K$ sufficiently large, we can assume that exactly one cylinder $c^*$ does not intersect $D_K^{h_0}$.
Without loss of generality, let $c^*\in R^*$, defined by some line $c\in R$.
If $K$ is chosen sufficiently large, by general position assumption (iii) we can further assume that $c$ is parallel to $h_0$.
Thus, similar to above, $\ell$ is a Ham-Sandwich cut of the point set $((R\setminus\{c\}) \cup B \cup G)\cap h_0$.
Again, at most one triple of points is collinear.
As $(R\setminus\{c\})$ contains an odd number of lines, $\ell$ passes through either 1 or 3 points.
If $\ell$ passes through 3 points $p_1$, $p_2$ and $p_3$, then without loss of generality $p_1\in (R\setminus\{c\})\cap h_0$.
Further, $p_2$ and $p_3$ must be induced by the same set of lines, without loss of generality $B$.
In both cases, we can slightly rotate $\ell$ such that $p_1$ is above $\ell$ and $p_2$ and $p_3$ lie on different sides of $\ell$.
Similarly, if $\ell$ contains 1 point $p_1$, then $p_1\in (R\setminus\{c\})\cap h_0$, and we can slightly translate $\ell$ such that $p_1$ lies above $\ell$.
Now again, $\ell$ lies below exactly $|R|/2$ lines of $R$, $|B|/2$ lines of $B$ and $|G|/2$ lines of $G$, which is what we required.

Thus, we have proved the following Theorem:

\begin{theorem}
\label{Thm:lines}
Given three sets $R, B$ and $G$ of lines in $\mathbb{R}^3$ in general position, each with an even number of lines, there is a line $\ell$ in $\mathbb{R}^3$ such that $\ell$ lies below exactly $|R|/2$ lines of $R$, $|B|/2$ lines of $B$ and $|G|/2$ lines of $G$.
\end{theorem}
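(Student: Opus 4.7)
The plan is to reduce Conjecture \ref{Conj:luis} to the mass-assignment Ham-Sandwich result Theorem \ref{Thm:HamSandwichSubspaces} in the special case $d=3$, $k=2$, so that the subspaces we range over are exactly the vertical planes through the origin (taking $e_1$ to be the vertical unit vector makes ``$1$-horizontal $2$-dimensional linear subspace of $\mathbb{R}^3$'' synonymous with ``vertical plane through the origin''). The challenge is that the inputs of that theorem are mass assignments, not finite sets of lines, so I first have to fatten each line into a genuine mass and then, at the end, argue that the bisecting line given by the theorem translates into a bisector in the combinatorial sense required by Conjecture \ref{Conj:luis}.

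First I would replace each line $r\in R$ by a cylinder of radius $\varepsilon$ around $r$, and analogously for $B$ and $G$, obtaining collections $R^*,B^*,G^*$. For every vertical plane $h$ through the origin, intersect these cylinders with a disk $D_K^h\subset h$ of very large radius $K$ centered at the origin; taking the area of the intersection defines three mass assignments $\mu_R^h,\mu_B^h,\mu_G^h$ on $\mathrm{Hor}_2^1(\mathbb{R}^3)$. Continuity of these assignments is essentially routine: the intersection of a cylinder with a plane depends continuously on the plane, and cutting with the large disk keeps everything bounded. Applying Theorem \ref{Thm:HamSandwichSubspaces} with $d=3$, $k=2$ yields a vertical plane $h_0$ and a line $\ell\subset h_0$ through the origin that simultaneously bisects $\mu_R^{h_0}, \mu_B^{h_0}, \mu_G^{h_0}$.

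The main obstacle is now translating this planar area-bisector back into a combinatorial above/below bisector of the original lines. The clean case is when every cylinder intersects $D_K^{h_0}$: then, for $\varepsilon$ small enough, $\mu_X^{h_0}$ is essentially a point mass of weight $\varepsilon$-disk area at each of the points $r\cap h_0$, so $\ell$ is a Ham-Sandwich cut of the coloured point set $(R\cup B\cup G)\cap h_0$. Using general position conditions (ii) and (iv), at most one colour-triple can be collinear on $\ell$; by parity, $\ell$ either avoids all these points or contains exactly two points of the same colour, and in the latter case one infinitesimal rotation of $\ell$ inside $h_0$ pushes one above and the other below without creating a new incidence. The above/below count in $h_0$ then exactly equals the count in $\mathbb{R}^3$ against the original lines, so $\ell$ lies below $|R|/2$ red, $|B|/2$ blue and $|G|/2$ green lines, as wanted.

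The remaining case is when some cylinder $c^*$ fails to meet $D_K^{h_0}$. By general position (i), choosing $K$ large forces this to happen for exactly one cylinder $c^*$, say with $c\in R$; by (iii), $c$ is then essentially parallel to $h_0$ (for $K$ large). So on $h_0$ we have a Ham-Sandwich cut of the point set $((R\setminus\{c\})\cup B\cup G)\cap h_0$, where the red class has odd size. Parity now forces $\ell$ to pass through $1$ or $3$ of these points, exactly one of which lies in $R\setminus\{c\}$; a small rotation or translation of $\ell$ inside $h_0$ resolves all incidences while keeping the lone red point above $\ell$ and any other pair split. Since $c$ is parallel to $h_0$, a suitable orientation convention makes $c$ count as lying above $\ell$, restoring the correct tally $|R|/2$. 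The hardest part, as above, is verifying that these infinitesimal perturbations can be carried out simultaneously and that the choice of which side $c$ ends up on is consistent with the required count; once this bookkeeping is checked, the theorem follows.
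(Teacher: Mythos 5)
Your construction and overall route are exactly the paper's: fatten the lines into $\varepsilon$-cylinders, cut with a large disk $D_K^h$ to get three mass assignments on the vertical planes through the origin, apply Theorem \ref{Thm:HamSandwichSubspaces} with $d=3$, $k=2$, $e_1$ vertical, and then split into the same two cases (all cylinders meet $D_K^{h_0}$, or exactly one cylinder $c^*$ misses it). Case 1 is handled as in the paper.

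In Case 2, however, your bookkeeping is inconsistent and, as written, does not give the claimed count. The point set $((R\setminus\{c\})\cup B\cup G)\cap h_0$ has $|R|-1$ red points, so after your perturbation that places the lone red incident point \emph{above} $\ell$, exactly $|R|/2$ points of $(R\setminus\{c\})\cap h_0$ lie above $\ell$, i.e.\ $\ell$ already lies below $|R|/2$ lines of $R\setminus\{c\}$. If, as you propose, $c$ is additionally ``counted as lying above $\ell$,'' then $\ell$ lies below $|R|/2+1$ lines of $R$, not $|R|/2$. Moreover, which side $c$ ends up on is not something you can fix by an ``orientation convention'': the above/below relation is determined by the configuration (via the unique vertical transversal), not chosen. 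The paper's resolution is the opposite of yours: since for $K$ large $c$ can be taken parallel to the vertical plane $h_0$, every vertical line through a point of $\ell\subset h_0$ stays inside $h_0$ and hence misses $c$, so $\ell$ does \emph{not} lie below $c$; with the red incident point pushed above $\ell$ this gives exactly $|R|/2$ red lines above $\ell$, as required. (Equivalently, if one could genuinely guarantee that $c$ lies above $\ell$, one would have to push the lone red point \emph{below} $\ell$ instead; you cannot have both.) Fixing this one step makes your argument coincide with the paper's proof.
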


\section{Center Transversals in general subspaces}
\label{Sec:transversal}

In this section we consider the more general case of assignments of mass distributions to all linear subspaces.
The space of all linear subspaces of fixed dimension defines in a natural way a \emph{vector bundle}.
Recall the following definition: a vector bundle consists of a base space $B$, a total space $E$, and a continuous projection map $\pi: E\mapsto B$.
Furthermore, for each $b\in B$, the fiber $\pi^{-1}(b)$ over $b$ has the structure of a vector space over the real numbers.
Finally, a vector bundle satisfies the \emph{local triviality condition}, meaning that for each $b\in B$ there is a neighborhood $U\subset B$ containing $p$ such that $\pi^{-1}(U)$ is homeomorphic to $U\times\mathbb{R}^d$.
A \emph{section} of a vector bundle is a continuous mapping $s: B\mapsto E$ such that $\pi s$ equals the identity map, i.e., $s$ maps each point of $B$ to its fiber.
Recall that we denote by $G_m(\mathbb{R}^n)$ the Grassmann manifold consisting of all $m$-dimensional subspaces of $\mathbb{R}^n$.
Let $\gamma_m^d$ be the \emph{canonical bundle} over $G_m(\mathbb{R}^n)$.
The bundle $\gamma_m^n$ has a total space $E$ consisting of all pairs $(L,v)$, where $L$ is an $m$-dimensional subspace of $\mathbb{R}^n$ and $v$ is a vector in $L$, and a projection $\pi: E\mapsto G_m(\mathbb{R}^n)$ given by $\pi((L,v))=L$.
Another space that we will be working with is the \emph{complete flag manifold} $\tilde{V}_{n,n}$: a \emph{flag} $\mathcal{F}$ in a vector space $V$ of dimension $n$ is an increasing sequence of subspaces of the form
\[ \mathcal{F}=\{0\}=V_0\subset V_1\subset\cdots\subset V_k=V. \]
A flag is a \emph{complete flag} if $\text{dim}V_i=i$ for all $i$ (and thus $k=n$).
The complete flag manifold $\tilde{V}_{n,n}$ is the manifold of all complete flags of $\mathbb{R}^n$.
Similar to the Grassmann manifold, we can define a canonical bundle for each $V_i$, which we will denote by $\vartheta_i^n$.
For details on vector bundles and sections, see \cite{milnor}.

\begin{lemma}
\label{Lem:Sections}
Let $s_1,\ldots,s_{m+1}$ be $m+1$ sections of the canonical bundle $\vartheta_l^{m+l}$.
Then there is a flag $\mathcal{F}\in\tilde{V}_{m+l,m+l}$ such that $s_1(\mathcal{F})=\ldots=s_{m+1}(\mathcal{F})$.
\end{lemma}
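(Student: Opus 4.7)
The plan is to set up a Stiefel--Whitney class obstruction. First I would fix the last section as a reference and form the $m$ difference sections $\sigma_i := s_i - s_{m+1}$ for $i = 1,\ldots,m$, each still a section of $\vartheta_l^{m+l}$. Bundling them together gives a single section $\sigma := (\sigma_1,\ldots,\sigma_m)$ of the Whitney sum $\xi := (\vartheta_l^{m+l})^{\oplus m}$, a real vector bundle of rank $lm$ over $\tilde{V}_{m+l,m+l}$. A zero of $\sigma$ at a flag $\mathcal{F}$ is exactly a flag where $s_1(\mathcal{F}) = \cdots = s_{m+1}(\mathcal{F})$, so it suffices to show that $\xi$ has no nowhere-zero section. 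For this I would use the classical fact that a nowhere-zero section of $\xi$ splits a trivial line bundle off $\xi$ and therefore forces $w_{lm}(\xi) = 0$; hence non-vanishing of the top Stiefel--Whitney class is enough.

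To compute $w_{lm}(\xi)$ I would exploit the tautological filtration $0 \subset V_1 \subset V_2 \subset \cdots \subset V_l$ on the fibers of $\vartheta_l^{m+l}$ over $\tilde{V}_{m+l,m+l}$. Its successive quotients are the tautological line bundles $\eta_1,\ldots,\eta_l$ on the flag manifold, and setting $x_i := w_1(\eta_i)$ the Whitney sum formula applied to the induced short exact sequences gives
\[
w(\vartheta_l^{m+l}) \;=\; \prod_{i=1}^{l}(1+x_i).
\]
By multiplicativity of total Stiefel--Whitney classes under direct sums, the top class of $\xi$ is then $w_{lm}(\xi) = (x_1 x_2 \cdots x_l)^m$.

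What remains is to check that this monomial is nonzero in $H^*(\tilde{V}_{m+l,m+l};\mathbb{Z}_2)$. For this I would invoke Borel's presentation of the mod-$2$ cohomology of the real complete flag manifold,
\[
H^*(\tilde{V}_{n,n};\mathbb{Z}_2) \;\cong\; \mathbb{Z}_2[x_1,\ldots,x_n]/(\sigma_1,\ldots,\sigma_n),
\]
with $n=m+l$ and $\sigma_j$ the $j$-th elementary symmetric polynomial in the $x_i$; the relations reflect the triviality of $\eta_1\oplus\cdots\oplus\eta_n$. This coinvariant ring carries the classical staircase additive basis of monomials $x_1^{a_1}\cdots x_n^{a_n}$ with $0\le a_i\le n-i$. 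Our monomial $x_1^m x_2^m\cdots x_l^m$ fits into this basis precisely because $m\le m+l-i$ holds exactly when $i\le l$, while the remaining exponents are zero; hence $(x_1\cdots x_l)^m$ is a basis element, in particular nonzero, and $\sigma$ must vanish somewhere.

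The step I expect to carry the most weight is the cohomological input: confirming the Borel presentation in the real, mod-$2$ setting and checking that the tautological filtration really does produce the factorization $\prod(1+x_i)$ of the total Stiefel--Whitney class. Both are standard (Milnor--Stasheff) but need to be set up cleanly. The remaining ingredients --- the difference trick, multiplicativity of Stiefel--Whitney classes under Whitney sums, and the passage from non-vanishing of the top class to non-existence of a nowhere-zero section --- are then essentially automatic.
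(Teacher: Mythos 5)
Your proposal is correct and follows essentially the same route as the paper: difference sections, the $m$-fold Whitney sum of $\vartheta_l^{m+l}$, and non-vanishing of the top Stiefel--Whitney class as the obstruction to a nowhere-zero section. The only divergence is the last step, where you compute $w(\vartheta_l^{m+l})=\prod_{i=1}^{l}(1+x_i)$ from the tautological line-bundle filtration and verify $(x_1\cdots x_l)^m\neq 0$ directly via the staircase monomial basis of the coinvariant algebra, whereas the paper pulls the class back from the Grassmannian through classifying spaces and cites a known proposition for the non-membership of $(t_1\cdots t_l)^m$ in the ideal $(\sigma_1,\ldots,\sigma_{m+l})$ --- a self-contained justification of the same algebraic fact.
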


This Lemma is a generalization of Proposition 2 in \cite{zivaljevic} and Lemma 1 in \cite{dolnikov}.
Our proof follows the proof in \cite{zivaljevic}.

\begin{proof}
Consider the sections $q_i:=s_{m+1}-s_i$.
We want to show that there exists a flag $\mathcal{F}$ for which $q_1(\mathcal{F})=\ldots =q_m(\mathcal{F})=0$.
The sections $q_1,\ldots, q_m$ determine a unique section in the $m$-fold Whitney sum of $\vartheta_l^{m+l}$, which we denote by $W$.
Note that $W$ has base $\tilde{V}_{m+l,m+l}$ and fiber dimension $ml$.
We will show that $W$ does not admit a nowhere zero section.
For this, it suffices to show that the highest Stiefel-Whitney class $w_{ml}(W)$ is nonzero (see \cite{milnor}, \S 4, Proposition 3).

By the Whitney product formula we have $w_{ml}(W)=w_l(\vartheta_l^{m+l})^m$.
Note that the projection $f:\tilde{V}_{m+l,m+l}\rightarrow G_l(\mathbb{R}^{m+l})$ which maps $(V_0,\ldots,V_l,\ldots,V_{m+l})$ to $V_l$ induces a bundle map from $\vartheta_l^{m+l}$ to $\gamma_l^{m+l}$.
Thus by the naturality of Stiefel-Whitney classes we have $w_l(\vartheta_l^{m+l})^m=f^*(w_l(\gamma_l^{m+l})^m)=f^*(w_l(\gamma_l^{m+l}))^m$.
Further, we have the following commutative diagram
\[
  \begin{tikzcd}
    \tilde{V}_{m+l,m+l} \arrow{r}{i} \arrow[swap]{d}{f} & \tilde{V}_{\infty,m+l} \arrow{d}{g} \\
    G_l(\mathbb{R}^{m+l}) \arrow[swap]{r}{j} & G_l(\mathbb{R}^{\infty})
  \end{tikzcd}
,\]
where $i$ and $j$ are inclusions and $g$ is the canonical map from $\tilde{V}_{\infty,m+l}$ to $G_l(\mathbb{R}^{\infty})$ (see e.g. \cite{hausmann, zivaljevic}).
In $\mathbb{Z}_2$-cohomology, we get the following diagram:
\[
  \begin{tikzcd}
   H^*( \tilde{V}_{m+l,m+l}) & H^*(\tilde{V}_{\infty,m+l}) \arrow[swap]{l}{i^*} \\
    H^*(G_l(\mathbb{R}^{m+l})) \arrow{u}{f^*} & H^*(G_l(\mathbb{R}^{\infty})) \arrow{l}{j^*} \arrow{u}[swap]{g^*}
  \end{tikzcd}
.\]
It is known that $H^*(\tilde{V}_{\infty,m+l})$ is a polynomial algebra $\mathbb{Z}_2[t_1,\ldots,t_{m+l}]$ and that $g^*$ maps $H^*(G_l(\mathbb{R}^{\infty}))$ injectively onto the algebra $\mathbb{Z}_2[\sigma_1,\ldots,\sigma_l]\subset\mathbb{Z}_2[t_1,\ldots,t_l]\subset\mathbb{Z}_2[t_1,\ldots,t_{m+l}]$, where $\sigma_i$ denotes the $i$'th symmetric polynomial in the variables $t_1,\ldots,t_l$ \cite{milnor, hausmann}.
Further, $H^*( \tilde{V}_{m+l,m+l})$ is a polynomial algebra $\faktor{\mathbb{Z}_2[t_1,\ldots,t_{m+l}]}{(\sigma_1,\ldots,\sigma_{m+l})}$ and $i^*$ is the corresponding quotient map.
Since $w_l({\gamma_l^{m+l}})=j^*(\sigma_l)$, we have $w_l(\vartheta_l^{m+l})^m=f^*(j^*(\sigma_l))^m$ and in particular $w_l(\vartheta_l^{m+l})^m=0$ would imply that $(\sigma_l)^m\in\ker i^*$, i.e. $(t_1\cdots t_l)^m$ is in the ideal $(\sigma_1,\ldots,\sigma_{m+l})$.
But this is a contradiction to Proposition 2.21 in \cite{zivguide}.
\end{proof}

Consider now a continuous map $\mu: \tilde{V}_{m+l,m+l}\rightarrow M_l$, which assigns an $l$-dimensional mass distribution to $V_l$ for every flag.
We call such a map an \emph{$l$-dimensional mass assignment on $\tilde{V}_{m+l,m+l}$}.

\begin{corollary}
\label{Cor:GeneralMasses}
Let $\mu_1,\ldots,\mu_{m+1}$ be $l$-dimensional mass assignments on $\tilde{V}_{m+l,m+l}$.
Then there exists a flag $\mathcal{F}\ni V_l$ such that some point $p\in V_l$ is a centerpoint for all $\mu_1^{\mathcal{F}},\ldots,\mu_{m+1}^{\mathcal{F}}$. 
\end{corollary}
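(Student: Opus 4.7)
The plan is to reduce the corollary to Lemma~\ref{Lem:Sections} by manufacturing, from each mass assignment, a section of the canonical bundle $\vartheta_l^{m+l}$ whose value at a flag $\mathcal{F}$ is a canonical centerpoint of $\mu_i^{\mathcal{F}}$ inside $V_l$. Since the fiber of $\vartheta_l^{m+l}$ over $\mathcal{F}$ is exactly $V_l$, any continuous selection of a point in $V_l$ as a function of $\mathcal{F}$ is a section. Once I have $m+1$ such sections $s_1,\ldots,s_{m+1}$, Lemma~\ref{Lem:Sections} hands me a flag $\mathcal{F}$ with $s_1(\mathcal{F})=\cdots=s_{m+1}(\mathcal{F})=:p$, and this common value $p\in V_l$ is, by construction, a centerpoint for every $\mu_i^{\mathcal{F}}$. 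This mirrors the way Dol'nikov/\v{Z}ivaljevi\'c deduce the center transversal theorem from the analogue of Lemma~\ref{Lem:Sections} for a single Grassmannian.

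The step that requires care is defining each $s_i$ as a \emph{continuous} choice of centerpoint. Fix a flag $\mathcal{F}$ and the $l$-dimensional mass distribution $\mu_i^{\mathcal{F}}$ on $V_l$. By the centerpoint theorem, the set
\[ C_i(\mathcal{F}) := \bigcap\{H\subset V_l \text{ closed halfspace with } \mu_i^{\mathcal{F}}(H) > l/(l+1)\} \]
is a non-empty compact convex subset of $V_l$. I would define $s_i(\mathcal{F})$ to be the centroid of $C_i(\mathcal{F})$, expressed intrinsically in $V_l$ so that $s_i(\mathcal{F})\in V_l$ and hence defines a point in the total space of $\vartheta_l^{m+l}$. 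Because the defining condition on centerpoints only involves mass of halfspaces, and $\mu_i^{\mathcal{F}}$ puts no mass on hyperplanes (mass distributions vanish on lower-dimensional sets by hypothesis), the set $C_i(\mathcal{F})$ varies continuously in the Hausdorff metric with $\mathcal{F}$; its centroid therefore depends continuously on $\mathcal{F}$, yielding a bona fide section.

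The main obstacle I anticipate is making the continuity of $s_i$ rigorous, since continuity of the map $\mathcal{F}\mapsto C_i(\mathcal{F})$ is not entirely routine: one must argue that, for mass distributions in the sense used here, a small perturbation of $\mu_i^{\mathcal{F}}$ in the weak topology produces only a small Hausdorff perturbation of its centerpoint region. The standard way around any pathology is a limiting argument, approximating each $\mu_i^{\mathcal{F}}$ by mollified distributions $\mu_i^{\mathcal{F},\varepsilon}$ for which $C_i^{\varepsilon}(\mathcal{F})$ depends continuously on $\mathcal{F}$ and then taking $\varepsilon\to 0$; compactness of the complete flag manifold and of the relevant region of $V_l$ ensures a cluster point which remains a common centerpoint in the limit. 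Once continuity is in hand, Lemma~\ref{Lem:Sections} closes the argument immediately.
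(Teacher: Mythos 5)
Your overall reduction is the same as the paper's: turn each mass assignment into a ``section'' of $\vartheta_l^{m+l}$ given by centerpoints and invoke Lemma~\ref{Lem:Sections}. The genuine gap is the step where you make this single-valued: you claim that the centerpoint region $C_i(\mathcal{F})$ varies continuously in the Hausdorff metric, so that its centroid is a continuous section. This is false in general. The centerpoint map is only \emph{upper} semicontinuous in the measure: the region can collapse abruptly under an arbitrarily small perturbation, and then any canonical point of it (centroid, Steiner point, etc.) jumps. Already for $l=1$ (where the centerpoint region is the median set) take a smooth density with mass $1/2$ on $[-2,-1]$ and mass $1/2$ on $[1,2]$; the median set is $[-1,1]$, but shifting a tiny amount of mass from the right bump to the left one makes the median set a single point near $-1$, so the centroid jumps from $0$ to roughly $-1$. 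Since such one-parameter families of masses arise from perfectly nice mass assignments on the flag manifold, your $s_i$ need not be continuous, and Lemma~\ref{Lem:Sections} cannot be applied to it. Your proposed repair (mollify the measures and pass to the limit) does not address this, because the discontinuity above persists for smooth, even real-analytic, densities: smoothing the masses does not restore lower semicontinuity of the centerpoint region.

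The paper sidesteps exactly this issue by \emph{not} selecting a point: it records, for each $\mu_i$, the whole centerpoint region as a multivalued section of $\vartheta_l^{m+l}$ with compact convex values (this multimap is upper semicontinuous with closed graph, which is all one can guarantee), and then invokes Proposition~1 of \v{Z}ivaljevi\'c \cite{zivaljevic}, which upgrades Lemma~\ref{Lem:Sections} to such multivalued convex compact sections -- in effect via approximate continuous selections of the multimap (graph-approximation, not mollification of the measures) plus a compactness/limit argument on the flag manifold. If you want to keep your write-up single-valued you would have to reproduce that machinery (or re-prove Lemma~\ref{Lem:Sections} directly for multivalued sections); as it stands, the continuity claim at the heart of your construction is the missing ingredient and it cannot be established in the form you state it.
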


\begin{proof}
For every $\mu_i$ and every flag $\mathcal{F}$, the centerpoint region of $\mu_i^{\mathcal{F}}$ is a convex compact region in the respective $V_l$.
In particular, for each $\mu_i$ we get a multivalued, convex, compact section $s_i$ in $\vartheta_l^{m+l}$.
Using Proposition 1 from \cite{zivaljevic}, Lemma \ref{Lem:Sections} implies that there is a Flag in which all $s_i$'s have a common point $p$.
\end{proof}

From this we can now deduce Theorem \ref{Thm:CenterTransversalsSubspaces}

\center*

\begin{proof}
Note that a $(n-1,k)$-center transversal in a $k$-dimensional space is a common centerpoint of the projection of the masses to a $k-(n-1)$-dimensional subspace.
Consider a flag $\mathcal{F}=(V_0,\ldots,V_d)$.
For each mass assignment $\mu_i$ define $\mu'_i(\mathcal{F}):=\pi_{k-(n-1)}(\mu_i^{V_k})$, where $\pi_{k-(n-1)}$ denotes the projection from $V_k$ to $V_{k-(n-1)}$.
Every $\mu'_i$ is an $(k-(n-1))$-dimensional mass assignment on $\tilde{V}_{d,d}$.
The result now follows from Corollary \ref{Cor:GeneralMasses} by setting $l=k-(n-1)$ and $m=d-k+n-1$.
\end{proof}

\section{Sections in product bundles}
\label{Sec:product}

Similar to before, we again work with vector bundles, but now over a different space.
Recall that a mass assignment $\mu$ on $G_{d-1}(\mathbb{R}^d)^n$ assigns a $d$-dimensional mass distribution $\mu^p$ to $\mathbb{R}^d$ for each $p=(h_1,\ldots,h_n)\in Gr_{d-1}(\mathbb{R}^d)^n$.
We want to show that given  $(d-1)n$ such mass assignments, there is a $p$ such that each $h_i$ bisects $d-1$ of their images.
The idea is the following: we assign $d-1$ masses to each $h_i$.
For every $p$, we now sweep a copy of $h_i$ along a line $\ell$ orthogonal to $h_i$ and for every mass assigned to $h_i$ we look at the point on $\ell$ for which the swept copy through that point bisects the mass.
We want to show that for some $p$, all these points coincide with the origin.

\begin{lemma}
\label{lem:product}
Consider the vector bundle $\xi:=(\gamma_{m}^{d})^k$ (the $k$-fold Cartesian product of $\gamma_{m}^{d}$) over the space $B:=G_{m}(\mathbb{R}^{d})^k$.
Let $q:=d-m$.
Then for any $q$ sections $s_1,\ldots, s_q$ of $\xi$ there exists $b\in B$ such that $s_1(b)=\ldots =s_q(b)=0$.
\end{lemma}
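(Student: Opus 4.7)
The plan is to mimic the strategy used in the proof of Lemma \ref{Lem:Sections}. A common zero of $s_1,\ldots,s_q$ is nothing but a zero of the combined section $(s_1,\ldots,s_q)$ of the Whitney sum $\xi^{\oplus q}$. Since the base $B = G_m(\mathbb{R}^d)^k$ has dimension $k\cdot m(d-m) = kmq$, which equals the fiber dimension of $\xi^{\oplus q}$, by the same obstruction-theoretic fact cited in Lemma \ref{Lem:Sections} (Milnor--Stasheff, \S 4, Proposition 3) it suffices to show that the top Stiefel--Whitney class $w_{kmq}(\xi^{\oplus q})$ is nonzero in $H^{kmq}(B;\mathbb{Z}_2)$.

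To compute this class, I would decompose $\xi$ as the external Whitney sum $\pi_1^*\gamma_m^d \oplus \cdots \oplus \pi_k^*\gamma_m^d$, where $\pi_i: B\to G_m(\mathbb{R}^d)$ denotes projection onto the $i$-th factor. Whitney's product formula then yields
\[ w(\xi^{\oplus q}) \;=\; w(\xi)^q \;=\; \prod_{i=1}^k \pi_i^*\bigl(w(\gamma_m^d)^q\bigr). \]
The K\"unneth identification $H^*(B;\mathbb{Z}_2)\cong H^*(G_m(\mathbb{R}^d);\mathbb{Z}_2)^{\otimes k}$ makes each $\pi_i^*$ the inclusion into the $i$-th tensor factor, so the component of the above product in the top degree $kmq$ is the tensor product of the top-degree components of the $k$ factors. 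A short combinatorial check of the multinomial expansion of $(1+w_1+\cdots+w_m)^q$ shows that the only exponent vector $(\alpha_0,\ldots,\alpha_m)$ with $\sum\alpha_i=q$ and $\sum i\alpha_i = mq$ is $\alpha_m=q$; hence the top-degree part of $w(\gamma_m^d)^q$ is precisely $w_m(\gamma_m^d)^q$. Consequently $w_{kmq}(\xi^{\oplus q})\neq 0$ if and only if $w_m(\gamma_m^d)^q\neq 0$ in $H^{mq}(G_m(\mathbb{R}^d);\mathbb{Z}_2)$.

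To verify the latter I would reuse the flag-manifold argument from the proof of Lemma \ref{Lem:Sections}: letting $f: \tilde V_{d,d}\to G_m(\mathbb{R}^d)$ be the canonical projection sending a flag to its $m$-dimensional subspace, the commutative diagram of cohomology maps used there gives $f^*\bigl(w_m(\gamma_m^d)^q\bigr) = i^*(\sigma_m^q)$, with $i^*$ the quotient map $\mathbb{Z}_2[t_1,\ldots,t_d]\to \mathbb{Z}_2[t_1,\ldots,t_d]/(\sigma_1,\ldots,\sigma_d)$. Hence $w_m(\gamma_m^d)^q=0$ would force $(t_1\cdots t_m)^q \in (\sigma_1,\ldots,\sigma_d)$, contradicting Proposition 2.21 of \cite{zivguide} (applied with the roles of $l$ and $m$ interchanged, i.e.\ with their $l=m$ and their $m=q$).

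The main obstacle is essentially bookkeeping: recognising the external product bundle $\xi$ as a Whitney sum of pullbacks, applying the K\"unneth decomposition to isolate the correct top-degree summand, and matching the parameters to the algebraic nonvanishing statement of \cite{zivguide}. Once these are in place, the proof is a direct generalisation of Lemma \ref{Lem:Sections} from a single Grassmannian to a $k$-fold product of Grassmannians, with no further topological input needed.
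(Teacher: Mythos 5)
Your proof is correct, and it follows the same overall strategy as the paper: reduce a common zero of $s_1,\ldots,s_q$ to a zero of the induced section of the $q$-fold Whitney sum, and show that its top Stiefel--Whitney class in degree $kmq$ survives by factoring it, via K\"unneth, into $k$ copies of a top class on a single Grassmannian. The differences are in the bookkeeping and in the source of the key nonvanishing. The paper replaces $\xi^{\oplus q}$ by the Cartesian product $\zeta=(\Gamma_m^d)^k$ (where $\Gamma_m^d$ is the $q$-fold sum of $\gamma_m^d$), using the fact that a bundle map over the same base is an isomorphism, then applies the cross-product formula and simply cites Dol'nikov's computation $w_{qm}(\Gamma_m^d)=1\neq 0$; you instead write $\xi\cong\pi_1^*\gamma_m^d\oplus\cdots\oplus\pi_k^*\gamma_m^d$, isolate the top-degree term $w_m(\gamma_m^d)^q$ in each tensor factor (your multinomial observation that only $w_m^q$ reaches degree $mq$ is right), and then re-derive $w_m(\gamma_m^d)^q\neq 0$ by pulling back to the complete flag manifold and invoking the symmetric-polynomial non-membership statement, exactly the mechanism of Lemma \ref{Lem:Sections} with the parameters $l$ and $m$ swapped. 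Both routes are sound; yours is more self-contained (it does not rely on Dol'nikov's lemma as a black box), while the paper's is shorter and makes the dependence on Dol'nikov explicit. One small remark: the equality of base dimension and fiber dimension is not actually needed for the implication you use --- a nowhere-zero section forces the top Stiefel--Whitney class to vanish regardless --- it only guarantees that the class lives in a degree where it can be nonzero.
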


This Lemma is another generalization of Proposition 2 in \cite{zivaljevic} and Lemma 1 in \cite{dolnikov}.
Our proof follows the proof in \cite{dolnikov}.

\begin{proof}
The sections $s_1,\ldots, s_q$ determine a unique section in the $q$-fold Whitney sum of $\xi$, which we denote by $\xi^q$.
$\xi^q$ has base $B$ and fiber dimension $kqm$.
We want to show that $\xi^q$ does not allow a nowhere zero section.
For this, it is again enough to show that the highest Stiefel-Whitney class $w_{kqm}(\xi^q)$ does not vanish.
Denote by $\Gamma_m^d$ the $q$-fold Whitney sum of $\gamma_m^d$ and consider the vector bundle $\zeta:=(\Gamma_m^d)^k$.
Note that $\zeta$ also has base $B$ and fiber dimension $kqm$.
Furthermore, there is a natural bundle map from $\zeta$ to $\xi^q$, and as they have the same base space, $\zeta$ and $\xi^q$ are isomorphic (see \cite{milnor}, \S 3, Lemma 3.1).
Thus, it is enough to show that the highest Stiefel-Whitney class $w_{kqm}(\zeta)$ does not vanish.
The Stiefel-Whitney classes of a Cartesian product of vector bundles can be computed as the cross product of the Stiefel-Whitney classes of its components in the following way (see \cite{milnor}, \S 4, Problem 4-A): 
\[ w_j(\eta_1\times\eta_2)=\sum_{i=0}^j w_i(\eta_1)\times w_{j-i}(\eta_2). \]
It was shown by Dol'nikov \cite{dolnikov} that $w_{qm}(\Gamma_m^d)=1\in\mathbb{Z}_2=H^{qm}(G_m(\mathbb{R}^d);\mathbb{Z}_2)$.
By the K\"{u}nneth theorem and induction it follows that $w_{kqm}((\Gamma_m^d)^k)=1\in\mathbb{Z}_2=H^{kqm}((G_m(\mathbb{R}^d))^k;\mathbb{Z}_2)$.
\end{proof}

In the following, we will use Lemma \ref{lem:product} only for the case $m=1$, i.e., for products of line bundles.
This case could also be proved using a Borsuk-Ulam-type result on product of spheres (Theorem 4.1 in \cite{dzedzej}, for $n_1=\ldots=n_r=d-1$, see also \cite{Ramos}).
Consider now $B:=G_{1}(\mathbb{R}^{d})^n$, i.e., all $n$-tuples of lines in $\mathbb{R}^d$ through the origin.
Further, for every $i\in\{1,\ldots,n\}$ we define $\xi_i$ as the following vector bundle: the base space is $B$, the total space $E_i$ is the set of all pairs $(b,v)$, where $b=(\ell_1(b),\cdots,\ell_n(b))$ is an element of $B$ and $v$ is a vector in $\ell_i(b)$, and the projection $\pi$ is given by $\pi((b,v))=b$.
It is straightforward to show that this is indeed a vector bundle.
In other words, we consider one line to be marked and the fiber over an $n$-tuple of lines is the $1$-dimensional vector space given by the marked line.
We are now ready to prove Theorem \ref{Thm:product}.

\product*

\begin{proof}
Consider $\xi=(\gamma_{1}^{d})^n$.
Recall that $B=G_{1}(\mathbb{R}^{d})^n$.
For an element $b=(\ell_1(b),\cdots,\ell_n(b))$ of $B$, consider for every $i\in\{1,\ldots,n\}$ the $(d-1)$-dimensional hyperplane through the origin that is orthogonal to $\ell_i(b)$ and denote it by $h_i(b)$.
Similarly, for every $(b,v)\in E_i$, let $g_i(b,v)$ be the hyperplane through $v$ orthogonal to $\ell_i(b)$. Note that $g_i(b,0)=h_i(b)$.

Consider now the mass $\mu_1$.
The set of all pairs $(b,v)$ such that $(g_1(b,v), h_2(b),\ldots, h_n(b))$ bisects $\mu_1$ defines a section $s_1^1$ in $\xi_1$.
Analogously, use $\mu_{(d-1)(j-1)+1}$ to define $s_j^1$ for all $j\in\{2,\ldots,n\}$.
Then $s^1:=(s_1^1,\ldots,s_n^1)$ is a section in $(\gamma_{1}^{d})^n$.
Similarly, for $i\in\{2,\ldots,d-1\}$, using the masses $\mu_{(d-1)(j-1)+i}$ for all $j\in\{2,\ldots,n\}$ define a section $s^i$ in $(\gamma_{1}^{d})^n$.

We have thus defined $d-1$ sections in $(\gamma_{1}^{d})^n$.
Hence, by applying Lemma \ref{lem:product}, we get that there is a point $b_0$ in $B$ such that $s^1(b_0),\ldots,s^{d-1}(b_0)=0$.
In particular, all orthogonal hyperplanes $g_i(b,v)$ contain the origin, so their collection is an element of $G_{d-1}(\mathbb{R}^d)^n$.
Further, it follows from the definition of the sections $s^i$ that $h_i$ simultaneously bisects $\mu_{(d-1)(i-1)+1}^p,\ldots,\mu_{(d-1)i}^p$.
\end{proof}

\section{Application: bisections with several cuts}
\label{Sec:pizza}

The objective of this section is to prove Theorem \ref{Thm:almost_pizza}.
Before we dive into the technicalities, let us briefly discuss the main ideas.
We first show that any $(d-1)n$ mass distributions in $\mathbb{R}^d$ can be almost simultaneously bisected by $n$ hyperplanes through the origin.
The idea of this proof is very similar to the proof of Theorem \ref{Thm:product}: consider some mass $\mu$ and assume that $n-1$ of the hyperplanes are fixed.
Sweep the last hyperplane along a line through the origin and stop when the resulting arrangement of $n$ hyperplanes almost bisects $\mu$.
We do the same for every mass, one hyperplane is swept, the others are considered to be fixed.
Each hyperplane is swept for $(d-1)$ masses.
Using Lemma \ref{Lem:Sections}, we want to argue, that there is a solution, such that all the swept hyperplanes are stopped at the origin.
The only problem with this approach is, that the points where we can stop the hyperplane are in general not unique.
In fact, the region of possible solutions for one sweep can consist of several connected components, so in particular, it is not a section, and we cannot use Lemma \ref{Lem:Sections} directly.
We will therefore need another Lemma, that says that we can find find a section in this space of possible solutions.
This Lemma is actually the only reason why our approach only works for the relaxed setting: we need to sometimes ignore certain hyperplanes to construct such a section.
However, constructing a section that lies completely in the space of solutions is stronger than what we would need to use Lemma \ref{Lem:Sections}.
It would be enough to argue, that assuming no almost simultaneous bisection exists, we could find a nowhere zero section contradicting Lemma \ref{Lem:Sections}.
It is thus possible that our approach could be strengthened to prove Conjecture \ref{Conj:langerman}.

Let us now start by stating the aforementioned result for bisections with hyperplanes containing the origin:

\begin{theorem}
\label{Thm:pizza_origin}
Let $\mu_1,\ldots,\mu_{(d-1)n}$ be $(d-1)n$ mass distributions in $\mathbb{R}^d$. Then there are $n$ hyperplanes, all containing the origin, that almost simultaneously bisect $\mu_1,\ldots,\mu_{(d-1)n}$.
\end{theorem}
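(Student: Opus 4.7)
My plan is to mirror the argument of Theorem \ref{Thm:product} and apply Lemma \ref{lem:product}, with the ``almost'' slack used to bridge exactly those places where a strict bisection section would be discontinuous. I take $B := G_1(\mathbb{R}^d)^n$ with the line bundles $\xi_1, \ldots, \xi_n$ (the fiber of $\xi_i$ over $b = (\ell_1(b), \ldots, \ell_n(b))$ is $\ell_i(b)$) and their product $\xi := (\gamma_1^d)^n$. Write $h_i(b)$ for the hyperplane through the origin orthogonal to $\ell_i(b)$ and $g_i(b, v) := h_i(b) + v$ for its translate by $v \in \ell_i(b)$. I group the $(d-1)n$ masses into $n$ blocks of size $d-1$ and assign the $j$-th mass of block $i$, which I call $\mu^{i,j}$, to the hyperplane $h_i$.

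For each $(i, j)$ I define the continuous signed-balance function $F_{i,j} \colon \xi_i \to \mathbb{R}$ by
\[ F_{i,j}(b, v) := \mu^{i,j}(R^+_{b,v}) - \mu^{i,j}(R^-_{b,v}), \]
where $R^\pm_{b,v}$ are the parity regions of the arrangement obtained from $(h_1(b), \ldots, h_n(b))$ by replacing $h_i(b)$ with $g_i(b,v)$. Let $T_{i,j}(b) \subseteq \ell_i(b)$ denote the set of $v$ for which the modified arrangement \emph{almost} bisects $\mu^{i,j}$. The key observation is that $T_{i,j}(b)$ always contains the zero set of $F_{i,j}(b, \cdot)$, and equals all of $\ell_i(b)$ whenever some other hyperplane $h_{i'}(b)$, $i' \neq i$, can already be removed from the origin-containing arrangement so that the remainder bisects $\mu^{i,j}$. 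Thus if $0 \in T_{i,j}(b^*)$ holds simultaneously for every $(i, j)$, the theorem follows.

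The main obstacle is the construction, for each $(i, j)$, of a continuous section $\sigma_{i,j} \colon B \to \xi_i$ with $\sigma_{i,j}(b) \in T_{i,j}(b)$ for every $b$. Zeros of $F_{i,j}(b, \cdot)$ can coalesce in pairs as $b$ varies, so the zero locus of $F_{i,j}$ is a multi-sheeted set that in general does not admit a global single-valued continuous selection; this is precisely the obstruction flagged in the introduction to Section \ref{Sec:pizza}. My plan is to cover $B$ by open charts on each of which a canonical continuous branch of zeros exists, and to show that at a boundary point $b_0$ where two such branches merge, the geometry of the merging forces one of the other hyperplanes $h_{i'}(b_0)$ to bisect $\mu^{i,j}$ on its own, so that the fat case $T_{i,j}(b_0) = \ell_i(b_0)$ applies and the two local branches can be interpolated through the full fiber. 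Making this absorption rigorous is the technical heart of the argument and is also what prevents the method from attacking Conjecture \ref{Conj:langerman} in full; crucially, as the paper remarks, we must produce an honest section inside the solution set rather than merely a nowhere-zero section.

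Once the $\sigma_{i,j}$ are available, I assemble them block by block: for each $j \in \{1, \ldots, d-1\}$ set $s^j := (\sigma_{1,j}, \ldots, \sigma_{n,j}) \colon B \to \xi$, obtaining $d-1$ sections of the rank-$n$ bundle $\xi = (\gamma_1^d)^n$. Lemma \ref{lem:product} applied with $m = 1$, $k = n$, $q = d - 1$ then yields a point $b^* \in B$ at which $s^1(b^*) = \cdots = s^{d-1}(b^*) = 0$. Every $\sigma_{i,j}(b^*)$ vanishes, so $0 \in T_{i,j}(b^*)$ for all $(i, j)$, which means the arrangement $(h_1(b^*), \ldots, h_n(b^*))$ of hyperplanes through the origin almost simultaneously bisects $\mu_1, \ldots, \mu_{(d-1)n}$, as required.
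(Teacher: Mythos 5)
Your setup and final assembly are the same as the paper's (one sweep bundle $\xi_i$ per hyperplane, masses grouped into blocks of $d-1$, the resulting $d-1$ sections of $(\gamma_1^d)^n$, and Lemma \ref{lem:product} with $m=1$, $q=d-1$). But the decisive step --- producing, for each $(i,j)$, a continuous section $\sigma_{i,j}$ with $\sigma_{i,j}(b)\in T_{i,j}(b)$ for all $b$ --- is exactly where your argument stops being a proof: you present it as a ``plan'', and the mechanism you propose does not work. You claim that when two local branches of zeros of $F_{i,j}(b,\cdot)$ merge at some $b_0$, the geometry forces some \emph{other} hyperplane $h_{i'}(b_0)$, $i'\neq i$, to bisect $\mu^{i,j}$ on its own, so that $T_{i,j}(b_0)$ is the whole fiber and the branches can be interpolated through it. There is no reason for this: two zeros of the fiberwise balance function can coalesce tangentially at any $v_0$, with the limits at infinity nonzero and with no sub-arrangement bisecting the mass, so no ``fat fiber'' is available at the merge point. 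Your stated criterion for $T_{i,j}(b)=\ell_i(b)$ is also incorrect as written: removing $h_{i'}$ from the arrangement in which $h_i$ has been translated to $g_i(b,v)$ is not the same as removing it from the origin-containing arrangement, so the origin property does not make every $v$ an almost-bisection. A further (smaller) issue is that $F_{i,j}$ is only defined up to sign on $\xi_i$, since $R^\pm$ require orientations of all hyperplanes and of the sweep line; one must pass to the oriented cover, which is where the paper's conditions (a) and (b) live.

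The paper closes precisely this gap with Lemma \ref{lem:quasisection}, and by a different mechanism than the one you sketch. The exact-bisection set $V_f$ (the zero set of the lifted, fiberwise limit-antipodal function) is a quasi-section, and $Z(V_f)$ is defined to contain the base points whose fiber solution set contains $0$ \emph{or is unbounded}. One then shows that two antipodal lifts of a base point cannot lie in the same connected component of $B'\setminus Z(V'_f)$: along any path between them the zero sets form an odd number of full-support components, and the median component must cross $v=0$ (or a tail limit vanishes), which is a point of $Z(V_f)$. A $\pm1$-coloring of the components then yields a section $s$ with $Z(s)\subset Z(V_f)$. Consequently, at the final point $b_0$ the ``almost'' slack comes from the unbounded case, i.e.\ the swept hyperplane $h_i$ itself may be deleted and the remaining $n-1$ hyperplanes bisect $\mu^{i,j}$ --- not from another hyperplane $h_{i'}$ bisecting the mass alone. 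Until you supply an argument of this kind (or a correct substitute for your ``absorption'' step), the proposal has a genuine gap at its technical heart.
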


As mentioned, in order to prove this result, we need a few additional observations.
In the following, by a \emph{limit antipodal} function we mean a continuous function $f:\mathbb{R}\mapsto\mathbb{R}$ with the following two properties:
\begin{enumerate}
\item $\lim_{x\rightarrow\infty}f=-\lim_{x\rightarrow -\infty}f$,
\item the set of zeroes of $f$ consists of finitely many connected components.
\end{enumerate}
See Figure \ref{Fig:limit_antipodal} for an illustration.
Note that these two conditions imply that if $\lim_{x\rightarrow\infty}f\neq 0$ and if the graph of $f$ is never tangent to the $x$-axis, the zero set consists of an odd number of components.
For any subset $A$ of a vector bundle $\xi=(E,B,\pi)$, denote by $Z(A)$ the set of base points on whose fiber $A$ contains $0$ or $A$ is unbounded.
In particular, for any section $s$, $Z(s)$ denotes the set of zeroes of the section (as a section is a single point on every fiber, and thus never unbounded).

\begin{figure}
\centering
\includegraphics[scale=0.7]{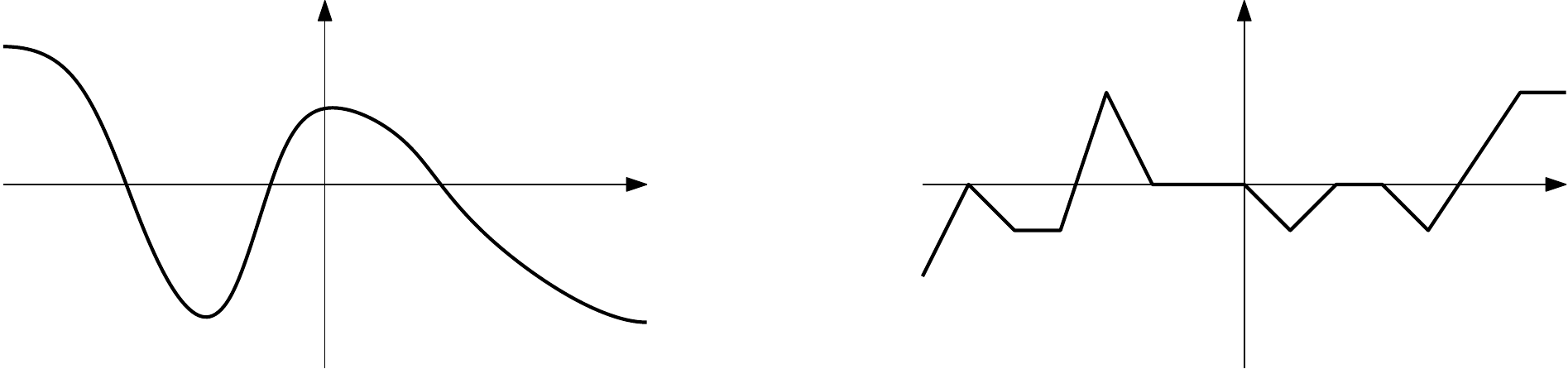}
\caption{Two graphs of limit antipodal functions.}
\label{Fig:limit_antipodal}
\end{figure}

Consider again $B:=G_{1}(\mathbb{R}^{d})^n$, i.e., all $n$-tuples of lines in $\mathbb{R}^d$ through the origin and the vector bundles $\xi_i$.
Note that $\xi_i$ has a natural orientable cover $\xi'_i=(E', B', \pi')$ where all the lines are oriented.
Denote by $p$ the covering map from $\xi'_i$ to $\xi_i$.

Assume now that we are given a continuous function $f: E'\rightarrow\mathbb{R}$ with the following properties:
\begin{enumerate}
\item[(a)] for every point $b'\in B$, the restriction of $f$ to the fiber $\pi^{-1}(b')$, denoted by $f_{b'}$, is a limit antipodal function;
\item[(b)] for any point $b\in B$ and any two lifts $b'_1, b'_2\in p^{-1}(b)$ we have either $f_{b'_1}(x)=f_{b'_2}(x)$ or $f_{b'_1}(x)=-f_{b'_2}(x)$ or $f_{b'_1}(x)=f_{b'_2}(-x)$ or $f_{b'_1}(x)=-f_{b'_2}(-x)$.
\end{enumerate}
Let $V'_f:=\{e\in E'|f(e)=0\}$ be the zero set of $f$.
Note that the second condition ensures that $V'_f$ is the lift of a set $V_f\subseteq E$.
We call $V_f$ a \emph{quasi-section} in $\xi_i$.
Further note that $Z(V_f)$ consists of the base points where $f_b(0)=0$ or $\lim_{x\rightarrow\infty}f_b=0$.

\begin{lemma}
\label{lem:quasisection}
Let $V_f$ be a quasi-section in $\xi_i$.
Then there is a section $s$ such that $Z(s)\subset Z(V_f)$. In particular, if $Z(V_f)=\emptyset$, then $\xi_i$ allows a nowhere zero section.
\end{lemma}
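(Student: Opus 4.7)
The plan is to construct the desired section $s$ explicitly by evaluating $f$ (or its limit) at a canonical point of each fiber. The key observation is that the oriented double cover $\xi'_i$ is a trivial line bundle over $B'$, so sections of $\xi_i$ correspond precisely to continuous maps $\tilde{s} : B' \to \mathbb{R}$ satisfying $\tilde{s}(\sigma b') = -\tilde{s}(b')$ under the covering involution $\sigma$. Thus the task reduces to exhibiting such a continuous anti-equivariant scalar whose vanishing locus sits inside the preimage of $Z(V_f)$.

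The first candidate I would try is $\tilde{s}(b') := f_{b'}(0)$. It is continuous by continuity of $f$, and its zero set equals $\{b' : f_{b'}(0) = 0\}$, which descends to $\{b : f_b(0) = 0\} \subseteq Z(V_f)$ directly from the definition of $Z(V_f)$. The only thing to verify is anti-equivariance under $\sigma$. Condition (b) offers four possibilities; in the two cases where $f_{\sigma b'}(x) = -f_{b'}(\pm x)$ one immediately obtains $f_{\sigma b'}(0) = -f_{b'}(0)$, so $\tilde{s}$ descends. The case $f_{\sigma b'}(x) = -f_{b'}(-x)$ is in fact the one that arises in the intended application in Section \ref{Sec:pizza}, since the signed mass difference used there flips sign when the orientation of the sweeping line is reversed, and simultaneously the sweeping parameter $x$ is reparametrised by $-x$.

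For the remaining cases, I would switch to the candidate $\tilde{s}(b') := L(b') := \lim_{x \to \infty} f_{b'}(x)$, which is well defined by property (a). In the case $f_{\sigma b'}(x) = f_{b'}(-x)$, combining with the limit antipodality gives $L(\sigma b') = \lim_{y \to -\infty} f_{b'}(y) = -L(b')$, as required, and $\{b : L(b) = 0\} \subseteq Z(V_f)$ by definition. Hence three of the four cases of (b) are handled, and the second assertion of the lemma is then automatic: if $Z(V_f) = \emptyset$, the constructed $\tilde{s}$ is nowhere zero, giving a nowhere zero section of $\xi_i$.

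The main obstacle is the one remaining case of (b) in which $f_{\sigma b'}(x) = f_{b'}(x)$, since there both $f_{b'}(0)$ and $L(b')$ are $\sigma$-invariant rather than anti-invariant, so neither simple candidate descends to a section of $\xi_i$. To cover that case one would need a more delicate choice, for instance a centroid of the zero set of $f_{b'}$ weighted by an appropriate odd kernel, with continuity across the boundary of $Z(V_f)$ checked by hand (new components of the zero set may be born or die there). Fortunately this case does not occur in the geometric setup of Section \ref{Sec:pizza}, so for the purposes of the paper it suffices to point out that we always land in one of the three cases resolved by the two candidates above.
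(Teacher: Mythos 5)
Your plan has a genuine gap: neither candidate $\tilde{s}(b')=f_{b'}(0)$ nor $\tilde{s}(b')=\lim_{x\to\infty}f_{b'}(x)$ descends to a section of $\xi_i$, and the case distinction you use to argue descent cannot be made. First, $B'$ orients \emph{all} $n$ lines, so the deck group is $(\mathbb{Z}_2)^n$, not a single involution $\sigma$; to descend to a section of $\xi_i$ a function on $B'$ must change sign exactly when the orientation of the marked line $\ell_i$ is reversed and be \emph{invariant} under reorienting the other lines. Condition (b) only says that for each pair of lifts \emph{some} one of four relations holds; it does not tie a particular relation to a particular deck transformation, so it cannot certify this equivariance pattern. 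In the very application of Section \ref{Sec:pizza} it fails: reorienting a non-marked hyperplane $h_j$ swaps $R^+$ and $R^-$, hence sends $f$ to $-f$ while leaving the fiber coordinate and the orientation of $\ell_i$ untouched, so both $f_{b'}(0)$ and the limit flip sign where they should be invariant. (Also, your claim that the relation $f_{\sigma b'}=f_{b'}$ never arises there is wrong: reversing the orientations of two non-marked hyperplanes gives exactly this case.) Second, even granting equivariance case by case, you propose to use $f(0)$ at some base points and the limit at others; which relation of (b) holds is not a locally constant function of the base point and the two formulas do not agree, so no continuous global section results. A decisive sanity check: if $b\mapsto f_b(0)$ did define a section, you would get $Z(s)\subseteq\{b: f_b(0)=0\}$, and feeding this into the proof of Theorem \ref{Thm:pizza_origin} would yield exact simultaneous bisection, i.e.\ Conjecture \ref{Conj:langerman}; the unit-disk example at the end of Section \ref{Sec:pizza} is there precisely to show the argument cannot be pushed that far.

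What is missing is the actual content of the lemma, which is global rather than fiberwise: one must show that two lifts differing in the orientation of $\ell_i$ can never lie in the same connected component of $B'\setminus Z(V'_f)$. This is done by following a path between such lifts, looking at the zero sets of the induced family of limit antipodal functions $f_t$ inside $\mathbb{R}\times[0,1]$, and running a parity argument: there is an odd number of connected components of full support, and the median one joins a zero at $(x,0)$ to one at $(-x,1)$, hence crosses $x=0$, producing a point of $Z(V_f)$ on the path. Once antipodal lifts are separated, one assigns $\pm1$ to the components antisymmetrically and builds the section as a signed distance-to-boundary function, which vanishes only on $Z(V_f)$. None of this appears in your proposal, and without it the statement does not follow.
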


Before proving this lemma, we show how to apply it to prove Theorem \ref{Thm:pizza_origin}.

\begin{proof}[Proof of Theorem \ref{Thm:pizza_origin}]
Define $h_i(b)$ and $g_i(b,v)$ as in the proof of Theorem \ref{Thm:product}.

Consider now the mass $\mu_1$.
For each $b\in B$, choose some orientations of $h_2(b),\ldots, h_n(b)$ and an orientation of $\ell_1(b)$ arbitrarily.
Then for each $v\in\ell_1(b)$, we have well-defined regions $R^+(b,v)$ and $R^-(b,v)$.
In particular, taking $\mu_1(R^+(b,v))-\mu_1(R^-(b,v))$ for all orientations defines a function $f_1: E'\rightarrow\mathbb{R}$ which satisfies condition (a) and (b) from above.
Let $V_1$ be the set of all pairs $(b,v)$ such that $(g_1(b,v), h_2(b), h_3(b),\ldots, h_n(b))$ bisects $\mu_1$.
As this is exactly the set of pairs $(b,v)$ for which $f_1(b,v)=0$, it follows that $V_1$ is a quasi-section.

Let now $s_1^1$ be a section in $\xi_1$ with $Z(s_1)\subset Z(V_1)$, the existence of which we get from Lemma \ref{lem:quasisection}.
Analogously, use $\mu_i$ to define $V_i$ and $s_i^1$ for all $i\in\{2,\ldots,n\}$.
Then $s^1:=(s_1^1,\ldots,s_n^1)$ is a section in $(\gamma_{1}^{d})^n$.
Similarly, for $k\in{2,\ldots,d-1}$, using the masses $\mu_{(k-1)n+1},\ldots,\mu_{kn}$ define a section $s^k$ in $(\gamma_{1}^{d})^n$.

We have thus defined $d-1$ sections in $(\gamma_{1}^{d})^n$.
Hence, by applying Lemma \ref{lem:product}, we get that there is a point $b_0$ in $B$ such that $s^1(b_0),\ldots,s^{d-1}(b_0)=0$.
We claim that $H:=(h_1(b_0),\ldots,h_n(b_0))$ almost simultaneously bisects $\mu_1,\ldots,\mu_{(d-1)n}$: without loss of generality, consider the mass $\mu_1$.
As $s_1^1(b_0)=0$, we know by the definition of $s_1^1$ that $(b_0,0)$ is in $Z(V_1)$.
By the definition of $Z(V_1)$ this means that $V_1\cap\pi^{-1}(b_0)$ (1) contains $(b_0,0)$ or (2) is unbounded.

In case (1), we get that $(g_1(b_0,0), h_2(b_0),\ldots, h_n(b_0))$ bisects $\mu_1$.
But since $g_i(b_0,0)=h_i(b_0)$, this set is exactly $H$.
In case (2), we notice that $V_1$ is unbounded on $\pi^{-1}(b_0)$ if and only if $\lim_{x\rightarrow\infty}f_{1,b_0}=0$.
But this means that $(h_2(b_0),\ldots, h_n(b_0))$ bisects $\mu_1$.
Thus, $H$ indeed almost bisects $\mu_1$.
\end{proof}

From Theorem \ref{Thm:pizza_origin} we also deduce the main result of this section:

\pizza*

\begin{proof}
Map $\mathbb{R}^d$ to the hyperplane $p: x_{d+1}=1$ in $\mathbb{R}^{d+1}$.
This induces an embedding of the masses $\mu_1,\ldots,\mu_{dn}$.
By defining $\mu'_i(S)=\mu(S\cap p)$ for every full-dimensional open subset of $\mathbb{R}^{d+1}$, we get $dn$ mass distributions $\mu'_1,\ldots,\mu'_{dn}$ in $\mathbb{R}^{d+1}$.
By Theorem \ref{Thm:pizza_origin}, there are $n$ hyperplanes $\ell'_1,\ldots\ell'_n$ of dimension $d$ through the origin that almost simultaneously bisect $\mu'_1,\ldots,\mu'_{dn}$.
Define $\ell_i:=\ell'_i\cap p$.
Note that each $\ell_i$ is a hyperplane of dimension $d-1$.
By the definition of $\mu'_i$, the hyperplanes $\ell_1,\ldots\ell_n$ then almost simultaneously bisect $\mu_1,\ldots,\mu_{dn}$.
\end{proof}

It remains to prove Lemma \ref{lem:quasisection}.

\begin{proof}[Proof of Lemma \ref{lem:quasisection}]
Consider again the bundle $\xi'_i=(E', B', \pi')$, which is a cover of $\xi_i$.
The set $Z(V'_f)$ partitions $B'\setminus Z(V'_f)$ into connected components.
Consider two lifts $b'_1,b'_2$ of a point $b\in B$ with the property that the marked line $\ell_i$ is oriented differently in $b'_1$ than in $b'_2$.
We will call a pair of such lifts \emph{antipodal}.
We claim that if $b'_1,b'_2\not\in Z(V'_f)$ then $b'_1$ and $b'_2$ are not in the same connected component.
If this is true, then we can assign $1$ or $-1$ to each connected component in such a way that for any antipodal pair $b'_1, b'_2$, whenever we assign $1$ to the connected component containing $b'_1$ we assign $-1$ to the connected component containing $b'_2$.
We the define $s'$ as follows: for every $b'$, let $d(b')$ be the distance to the boundary of its connected component (note that there are several ways to define distance measures on $B'$, any of them is sufficient for our purposes).
Place a point at distance $d(b')$ from the origin on the positive side of $\ell_i$ if the connected component containing $b'$ was assigned a $1$, and on the negative side otherwise.
This gives a section on $\xi'$.
Further, for any two antipodal lifts $b'_1, b'_2$, we have $s(b'_1)=-s(b'_2)$.
Also, for any two lifts $b'_3, b'_4$, that are not antipodal, that is, $\ell_i$ is oriented the same way for both of them, we have $s(b'_3)=s(b'_4)$.
Thus, $s'$ projects to a section $s$ in $\xi$ with the property that $s(b)=0$ only if $b\in Z(V_f)$, which is want we want to prove.

Hence, we only need to show that a pair $b'_1, b'_2$ of antipodal lifts is not in the same connected component.
To this end, we will show that every path in $B'$ from $b'_1$ to $b'_2$ crosses $Z(V_f)$.
Let $\gamma$ be such a path.
Then $\gamma$ induces a continuous family of limit antipodal functions $f_t$, $t\in[0,1]$, with $f_0=f_{b'_1}$ and $f_1=f_{b'_2}$.
Further, as $b'_1$ and $b'_2$ are antipodal, we have $f_0(x)=\pm f_1(-x)$.
If for any $t$ we have $\lim_{x\rightarrow\infty}f_t=0$ we are done, so assume otherwise.
Then, it is not possible that $f_0(x)=f_1(-x)$, as in this case $\lim_{x\rightarrow\infty}f_0=-\lim_{x\rightarrow\infty}f_1$ , so by continuity, there must be a $t$ with $\lim_{x\rightarrow\infty}f_t=0$ .
Thus, assume that we have $f_0(x)=-f_1(x)$.

The set of zeroes of the $f_t$ defines a subset of $\mathbb{R}\times [0,1]$, which we denote by $W$.
See Figure \ref{Fig:path_family} for an illustration.
In general $W$ is not connected, but has finitely many connected components, as by the second condition for limit antipodality each $f_t$ has finitely many connected components of zeroes.
We say that a connected component $W_i$ of $W$ has \emph{full support} if for every $t\in [0,1|$, $f_t$ has a zero in $W_i$.
It can be deduced from the limit antipodality of the $f_t$'s that $W$ has an odd number of connected components with full support, denoted by $W_1,\ldots,W_{2k+1}$.
Consider the median component $W_{k+1}$.
Without loss of generality, $W_{k+1}$ is a path in $\mathbb{R}\times [0,1]$ from $(x,0)$ to $(-x,1)$.
By a simple continuity argument, we see that $W_{k+1}$ must cross the line $(0,t), t\in[0,1]$.
At this crossing, we are at a base point $b'\in Z(V_f)$, which concludes the proof.
\end{proof}

\begin{figure}
\centering
\includegraphics[scale=0.65]{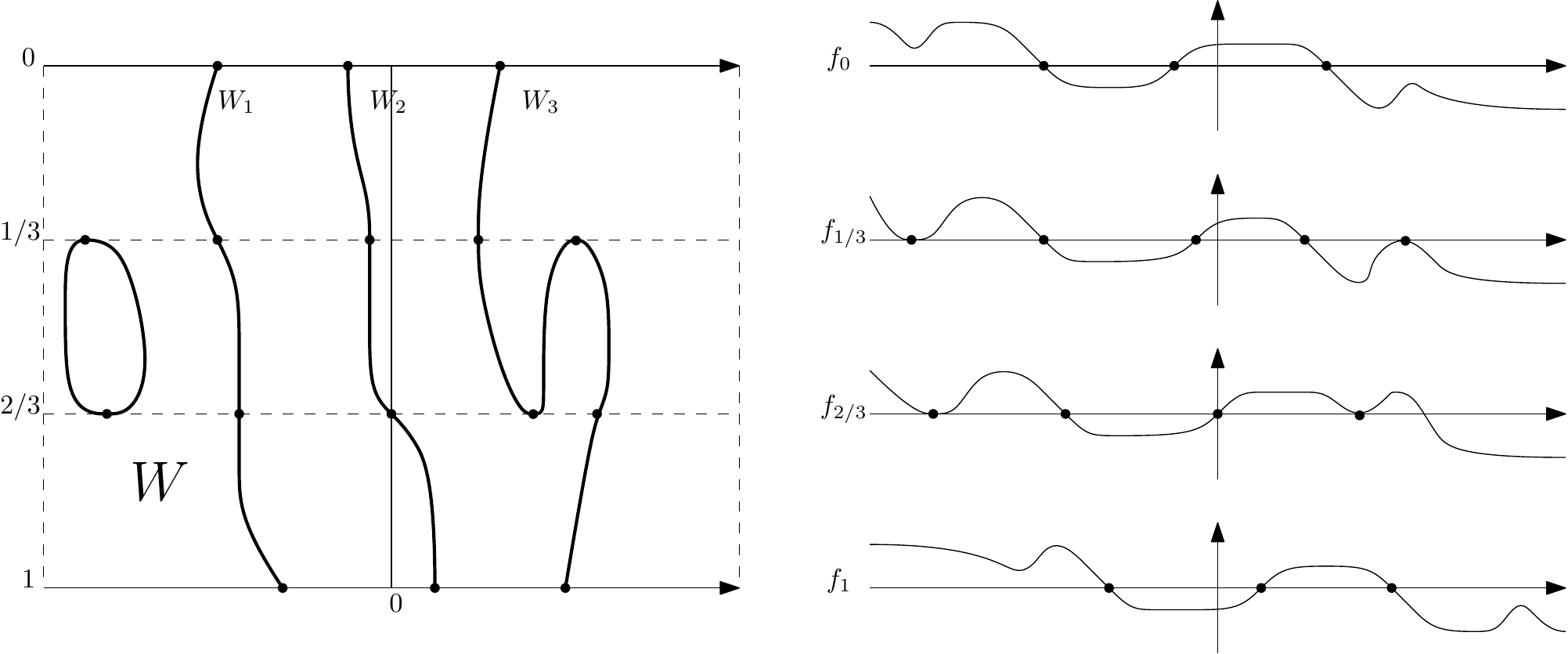}
\caption{The set $W$ for a family of limit antipodal functions between two antipodal lifts.}
\label{Fig:path_family}
\end{figure}

In order to prove Conjecture \ref{Conj:langerman}, we would like to choose $Z(V_f)$ as the set of base points where $f_b(0)=0$.
Let us briefly give an example where our arguments fail for this definition.
Consider $\mu$ as the area of a unit disk in $\mathbb{R}^2$.
If we want to simultaneously bisect $\mu$ with two lines $\ell_1$, $\ell_2$ through the origin, these lines need to be perpendicular.
Further, any single line through the origin bisects $\mu$ into two equal parts.
Imagine now the line $\ell_1$ to be fixed, and consider the limit antipodal function $f_b$ defined by sweeping $\ell_2$ along an oriented line perpendicular to $\ell_1$.
Without loss of generality, this function can be written as

\[
f_b(x) =
\begin{cases}
0 &\quad x\in [-\infty, -1] \\
1+x &\quad x\in [-1, 0] \\
1-x &\quad x\in [0, 1] \\
0 &\quad x\in [1, \infty]. \\
\end{cases}
\]

Note that this holds whenever $\ell_1$ and the sweep line for $\ell_2$ are perpendicular, so in particular, continuously rotating the arrangement by $180^\circ$ induces a path between two antipodal lifts in the cover.
Further, along this path we never had $f_b(0)=0$, so the two antipodal lifts would be in the same connected component, which would break the proof of Lemma \ref{lem:quasisection} under this definition of $Z(V_f)$.
Thus, conjecture \ref{Conj:langerman} remains open for now.

%

\bibliographystyle{plainurl} %
\bibliography{refs}

\end{document}